\newcommand*{\circled}[1]{\lower.7ex\hbox{\tikz\draw (0pt, 0pt)%
    circle (.5em) node {\makebox[1em][c]{\small #1}};}}
\def\ie{\textit{i.e.}\xspace}
\def\etal{\textit{et al.}\xspace}
\def\etc{\textit{etc.}\xspace}
\def\eg{\textit{e.g.}\xspace}
\def\wrt{\textit{w.r.t.}\xspace}
\newtheorem{lemma}{Lemma}
\newcommand{\bluenote}[1]{\textcolor{blue}{#1}}
\begin{document}

\title{Collaborative Inference for Large Models with Task Offloading and Early Exiting}

\author{
        Zuan Xie,~
 	Yang Xu,~\IEEEmembership{Member,~IEEE,}~
 	Hongli Xu,~\IEEEmembership{Member,~IEEE,}~
        Yunming Liao,~ 
        Zhiyuan Yao~
 	\IEEEcompsocitemizethanks{
 		\IEEEcompsocthanksitem Z. Xie, Y. Xu, H. Xu, Y. Liao and Z. Yao are with the School of Computer Science and Technology, University of Science and Technology of China, Hefei, Anhui, China, 230027, and also with Suzhou Institute for Advanced Research, University of Science and Technology of China, Suzhou, Jiangsu, China, 215123. E-mails: xz1314@mail.ustc.edu.cn; xuyangcs@ustc.edu.cn; xuhongli@ustc.edu.cn; ymliao98@mail.ustc.edu.cn; zhiweiyao@mail.ustc.edu.cn.
 	}
}

\IEEEtitleabstractindextext{%
\begin{abstract}
In 5G smart cities, edge computing is employed to provide nearby computing services for end devices, and the large-scale models (\eg, GPT and LLaMA) can be deployed at the network edge to boost the service quality. 
However, due to the constraints of memory size and computing capacity, it is difficult to run these large-scale models on a single edge node. 
To meet the resource constraints, a large-scale model can be partitioned into multiple sub-models and deployed across multiple edge nodes.
Then tasks are offloaded to the edge nodes for collaborative inference.
Additionally, we incorporate the early exit mechanism to further accelerate inference.
However, the heterogeneous system and dynamic environment will significantly affect the inference efficiency.
To address these challenges, we theoretically analyze the coupled relationship between task offloading strategy and confidence thresholds, and develop a distributed algorithm, termed DTO-EE, based on the coupled relationship and convex optimization.
DTO-EE enables each edge node to jointly optimize its offloading strategy and the confidence threshold, so as to achieve a promising trade-off between response delay and inference accuracy.
The experimental results show that DTO-EE can reduce the average response delay by 21\%-41\% and improve the inference accuracy by 1\%-4\%, compared to the baselines.

\end{abstract}
\begin{IEEEkeywords}
\emph{Edge Computing, Task Offloading, Collaborative Inference, Early Exiting}.
\end{IEEEkeywords}
}

\maketitle
\IEEEdisplaynontitleabstractindextext
\IEEEpeerreviewmaketitle


 

\section{Introduction}
With the development of 5G smart cities, smart street lights, as a basic infrastructure, amounted to 23 million units worldwide by the end of 2022 and are expected to reach 63.8 million by 2027 \cite{report2023}. 
Besides the lighting function, these street lights can serve as a general platform for many other public services. 
For example, surveillance cameras can be deployed to analyze traffic patterns, monitor public safety, or detect abnormal accidents \cite{yong2023implementation,mahoor2020state}.
With the help of 5G edge servers, data from various end devices can be processed and analyzed near the data source, significantly improving the response performance of real-time services\cite{shi2016edge}. 
To further boost the performance of these services, deep learning (DL) models, especially the large-scale models (\eg, GPT\cite{chatgpt} and LLaMA\cite{llama}),  can be deployed at the network edge to enable more intelligent data analysis\cite{gruver2024large}.
However, limited by the memory size and computing capacity of the edge nodes (\eg, surveillance cameras or edge servers), it is difficult even infeasible to run these large-scale models on each edge node.

To meet the resource constraints on edge servers, existing works mainly focus on model compression \cite{ganesh2021compressing,choudhary2020comprehensive} and model partitioning \cite{gao2023Task,fan2023joint,exit1}. 
Although model compression can significantly reduce computational and storage requirements through some advanced techniques such as quantization and pruning, it often results in a certain or even unbearable accuracy loss. 
Additionally, many model compression techniques rely on specific software or hardware support (\eg, low-precision calculations with TensorRT and GPU\cite{li2021unleashing}), or require much extra time for additional adjustments (\eg, architecture search for pruning\cite{dong2019network}).
On the other hand, model partitioning can coordinate multiple nodes to train or infer large-scale models without compromising accuracy.
In edge scenarios, a large-scale model can be divided into several sub-models, each of which is deployed to an edge node. 
Accordingly, a complete inference task is divided into several subtasks, which are processed by the nodes holding the corresponding sub-models with reduced memory and computing burden.
Considering the dependency relationship of these sub-models, a predecessor node (\ie, a node holds a sub-model whose output is the input of another sub-model) will transmit the intermediate features (\ie, the output of an intermediate sub-model) to a successor node, until the inference through all sub-models is completed.
Some works\cite{fan2023joint,exit1} propose to divide the model into two parts and separately distribute to the edge nodes for collaborative inference.
However, this coarse-grained model partitioning exhibits limited effectiveness in reducing the size of sub-models, which may still lead to overload on the nodes.
To further reduce the computing load, some works propose to coordinate multiple edge nodes with a fine-grained model partitioning\cite{li2024distributed,tang2020joint}.
For instance, Tang \textit{et al}.\cite{tang2020joint} divide a large-scale model into multiple (\textgreater2) sub-models and propose a multi-task learning approach to assign each sub-model to an appropriate edge node.

Although fine-grained model partitioning enables offloading the light subtasks to better adapt the capabilities of different edge nodes, the many-to-many dependency between edge nodes (\ie, a node may receive the intermediate features from multiple predecessor nodes and can transmit its features to multiple successors) leads to more complex task offloading strategies.
Therefore, it is technically challenging to offload each subtask to an appropriate edge node that prevents node overload and enhances resource utilization.
In addition, the optimization of task offloading also suffers from two other critical challenges in practical applications.
\textbf{1) Heterogeneous System.} Edge nodes typically possess limited and heterogeneous resources\cite{li2022pyramidfl,lim2020federated,liao2024mergesfl,liao2023accelerating}. 
For instance, computing and bandwidth capacities may vary by more than tenfold among different edge nodes\cite{chen2022decentralized,lai2021oort,liao2024asynchronous}. 
Given the same computing load across different edge nodes, the system heterogeneity will lead to long response delay on weak nodes and resource waste on strong nodes, significantly impacting the response efficiency.
\textbf{2) Dynamic Environment.} The task arrival rate changes over time and space\cite{ai2023smart,liao2023adaptive,liao2024parallelsfl,liao2023decentralized}. 
For example, the cameras deployed at a crowded train station will generate more tasks than those at an empty
campus. 
Additionally, the internal conditions of edge nodes, such as temperature and voltage fluctuations, will affect their computing capabilities\cite{patsias2023task}. 
The dynamic environment leads to imbalanced loads for the edge nodes, and the task offloading strategy needs to be adjusted to adapt to the dynamic environment.

Some recent works \cite{tiwary2018response,2partition2} have made the effort to address these challenges.
For instance, Tiwary \textit{et al}. propose a non-cooperative game based task offloading (NGTO) method\cite{tiwary2018response} to minimize the task response delay by assigning appropriate computing loads to heterogeneous edge nodes.
However, NGTO requires each node to gather task offloading strategies of others, and update the task offloading strategy in a round-robin manner, which leads to a long decision time.
Besides, Mohammed \textit{et al}.\cite{2partition2} propose DINA-O, an approach that offloads multiple interdependent subtasks by matching appropriate edge nodes to balance the load and improve resource utilization.
But it assumes fixed computing resources on edge nodes, and requires each node can obtain the status information (\eg, available computing resources) of all nodes in time, which is unsuitable for the dynamic network environment.

To realize efficient inference for large-scale models at the edge, we propose a collaborative inference framework.
As illustrated in Fig. \ref{fig:edgecol}, a large-scale model is divided into multiple sub-models, each of which is deployed across multiple edge servers.
In order to accelerate task inference, the early exit mechanism is employed in our framework, where some sub-models are mounted with exit branches.
Generally, the tasks are continuously produced by end devices and delivered to edge servers for collaborative inference.
The edge servers are enabled to develop the task offloading strategies in a distributed fashion to balance the computing loads of their heterogeneous successor servers.
The distributed manner enables fast convergence of offloading strategy on each server, enhancing adaptability to the dynamic environment.

\begin{figure}[!t]
\centering
\includegraphics[width=0.95\linewidth]{./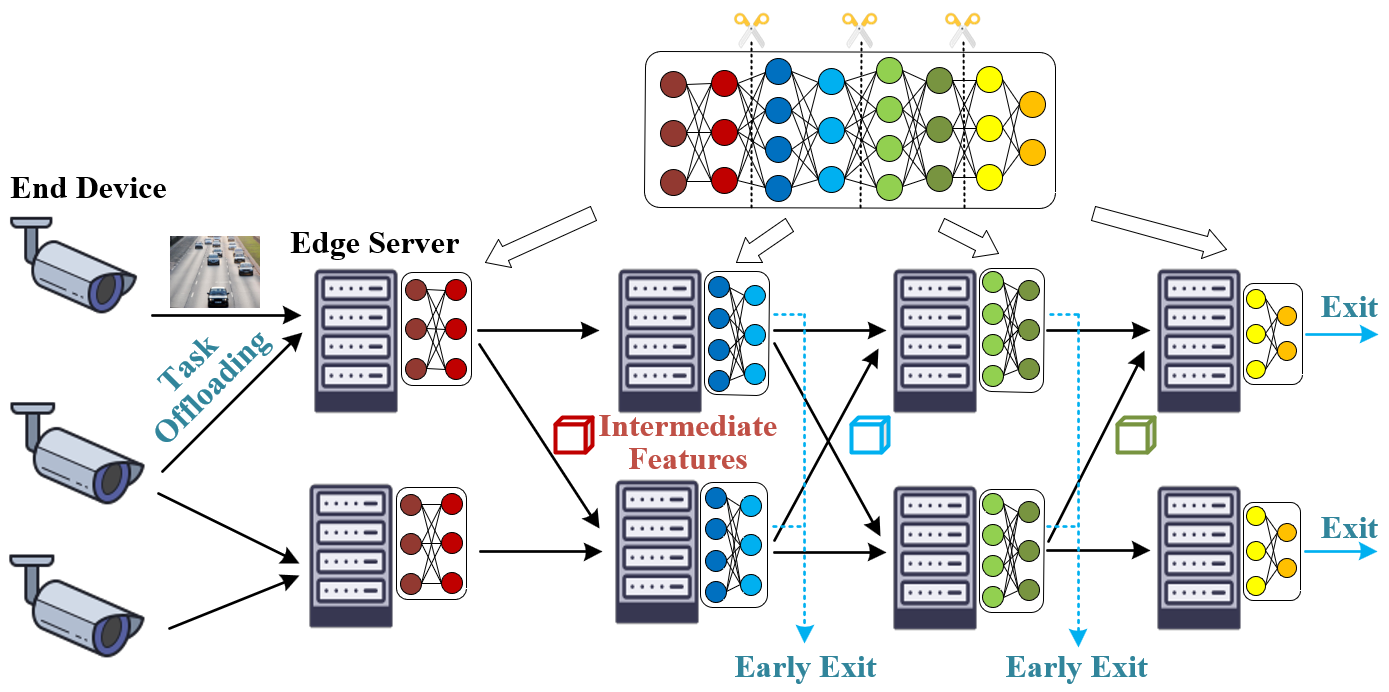}
\caption{\label{fig:edgecol}Illustration of collaborative inference of a large model with multiple edge nodes.}
\vspace{-4mm}
\end{figure}

Though an efficient task offloading strategy can enhance resource utilization and reduce response delay, the incorporation of early exit mechanism may complicate the development of offloading strategy.
Usually, a lower early exit confidence threshold allows more tasks to exit at the earlier edge servers\cite{laskaridis2021adaptive}, which contributes to shorter inference delay and less computing load on the subsequent servers, but potentially results in more accuracy loss, and vice versa.
Furthermore, the task offloading strategy also influences how computing loads are distributed across different edge servers, which in turn affects subsequent decisions regarding both the offloading strategy and the confidence threshold.
Therefore, we need to consider the interaction between the task offloading strategy and the confidence threshold, and jointly optimize them to achieve the ideal trade-off between task response delay and inference accuracy.
Our main contributions are as follows:
\begin{itemize}
\item 
To enable the deployment of large-scale models at the edge, we propose a collaborative inference framework, which coordinates multiple edge nodes and employs early exit mechanism to accelerate task inference.
\item 
We theoretically analyze the coupled relationship between task offloading strategy and confidence thresholds of different exit branches.
Based on the coupled relationship and convex optimization, we develop a distributed algorithm, called DTO-EE, which enables each edge node to jointly optimize its offloading strategy and the confidence threshold, so as to achieve a promising trade-off between response delay and inference accuracy.
\item 
We conduct extensive simulation experiments to evaluate the performance of DTO-EE under various system settings. 
The experimental results demonstrate that DTO-EE can reduce the average response delay by 21\%-41\% and improve the inference accuracy by 1\%-4\%, compared to the baselines.
\end{itemize}
\section{Preliminaries and Collaborative Inference}\label{sec:prelim}

\subsection{Model Partitioning}
The collaborative inference is based on model partitioning.
Generally, the complete model can be regarded as a function $M(\cdot)$ that maps the input $x$ to the prediction $M(x)$. 
Assume that the complete model is divided into $H$ sub-models, denoted as $M_{1},M_{2},...,M_{H}$, the output of each sub-model serves as the input of its next sub-model.
Then, the model inference can be expressed as:
\begin{equation}
	\hat{y} = M(x) = (M_{H} \circ M_{H-1} \circ \cdots \circ M_{1})(x)
\end{equation}
where $x$ is the input and $\hat{y}$ is the prediction from the model. 
The operator $\circ$ denotes the combination, \ie, $(M_{h} \circ M_{h-1})(\cdot) = M_{h}(M_{h-1}(\cdot))$.
The output of sub-model $M_{h}$ is denoted as $z_{h}$ ($h \in [1, H]$).
We denote $z_{0} = x$ and $z_{h} = M_{h}(z_{h-1})$.

In addition, we add early exit branches to certain sub-models.
Then, we feed the output of the sub-model $M_{h}$ to the corresponding branch (\ie, $b_{h}$) to obtain an early prediction:
\begin{equation}
	\hat{y}_{h} = b_{h}(z_{h}), \text{if}\ {E}_{h}=1
\end{equation}
where ${E}_{h}=1$ means that the sub-model $M_{h}$ has an early exit branch $b_{h}$, and $z_{h}$ is the input of the branch $b_{h}$.


\subsection{Network Model}
We consider an edge network consisting of $\mathcal{U}$ edge servers (ESs) and $\mathcal{V}$ end devices (EDs), collectively referred to as edge nodes.
We use $e^{h}_{i}$ ($i \in [1,n_{h}]$) to represent the ES with a sub-model $M_{h}$. 
Each ES holds a sub-model, and we use $n_{h}$ to represent the number of ESs that hold the sub-model $M_{h}$.
In particular, we use $e^{0}_{i}$ ($i \in [1,\mathcal{V}]$) to represent an ED.
The set of ESs possessing the same sub-model $M_{h}$ is denoted as $S^{h}$, where $|S^{h}|=n_{h}$, and the set of all ESs is denoted as $S=S^{1}\cup S^{2}\cdots \cup S^{H} = \{e^{1}_{1},\cdots,e^{1}_{n_{1}},e^{2}_{1},\cdots,e^{2}_{n_{2}}, \cdots,e^{H}_{1},\cdots,e^{H}_{n_{H}}\}$, where $|S|=\mathcal{U}$. 
Each ES $e^{h}_{i} \in S$ has a maximum computing capacity, denoted as $\mu^{h}_{i}$, which is measured by the number of floating-point operations per unit time.
The set of all EDs is denoted as $D =\{e^{0}_{1},e^{0}_{2},\cdots,e^{0}_{\mathcal{V}}\}$, where $|D|=\mathcal{V}$.
Due to the directional nature of task offloading and the dependency between sub-models, the edge nodes are arranged to work in a pipeline manner as illustrated in Fig. \ref{fig:edgecol}. 
Thus, task offloading happens between directly connected nodes.
We denote the set of successors of node $e^{h}_{i}$ as $L^{h}_{i}$, which will receive subtasks from node $e^{h}_{i}$, and the set of predecessors of node $e^{h}_{i}$ is denoted as $V^{h}_{i}$, which may offload subtasks to node $e^{h}_{i}$.
The transmission rate between nodes $e^{h}_{i}$ and $e^{h+1}_{j}$ is denoted as $r^{h}_{i,j}$.
The offloading strategy of node $e^{h}_{i}$ is denoted as the vector $p^{h}_{i} = \{p^{h}_{i,j}|e_{j}\in L^{h}_{i}\}$, where $p^{h}_{i,j}$ is the probability that $e^{h}_{i}$ offloads its tasks to $e^{h+1}_{j}$.
We denote $P=\{p^{h}_{i}|e^{h}_{i}\in S\cup D\}$ as the task offloading strategies of all nodes.
For clarity, we summarize the important notations in Table 1. 

\begin{table}[!t]
\setlength{\abovecaptionskip}{10pt}%
\setlength{\belowcaptionskip}{0pt}%
\centering
\caption{Key Notations.}
\label{tbl:natation}
\resizebox{\linewidth}{!}{
\begin{tabular}{cl}
\hline
\textbf{Symbol} & \textbf{Semantics}\\
\hline
$D$ & the set of EDs $\{e^{0}_{1},e^{0}_{2},\cdots,e^{0}_{V}\}$ 
\\
$S$ & the set of ESs $\{e^{1}_{1},\cdots,e^{1}_{n_{1}},e^{2}_{1},\cdots,e^{2}_{n_{2}}, \cdots,e^{H}_{1},\cdots,e^{H}_{n_{H}}\}$
\\ 
$S^{h}$ & the set of ESs holding the sub-model $M_{h}$
\\
$V^{h}_{i}$ & the set of predecessor edge nodes of $e^{h}_{i}$
\\
$L^{h}_{i}$ & the set of successor edge nodes of $e^{h}_{i}$
\\
$\phi^{h}_{i}$ & task arrival rate of edge node $e^{h}_{i}$ 
\\
$\Phi$ & total task arrival rate of the system 
\\
$\mu^{h}_{j}$ & the maximum computing capacity of edge node $e^{h}_{j}$
\\
$\alpha_{h}$ & the computing resources required for inference on sub-model $M_{h}$ 
\\
$\beta_{h}$ & the size of input data of sub-model $M_{h}$
\\
$p^{h}_{i}$ & the task offloading strategy of edge node $e^{h}_{i}$ 
\\
$r_{i,j}^{h}$ & transmission rate from edge node $e^{h}_{i}$ to $e^{h+1}_{j}$ 
\\
$\lambda^{h}_{j}$ & the total computing resources required by edge node $e^{h}_{j}$ 
\\
\hline
\end{tabular}
}
\vspace{-4mm}
\end{table}

\subsection{Collaborative Inference}
Tasks arrive at each ED $e^{0}_{i} \in D$ following a Poisson process with an expected arrival rate of $\phi^{0}_{i}$.
The total task arrival rate of the system is denoted as $\Phi = \sum_{e^{0}_{i} \in D} \phi^{0}_{i}$.
Upon the arrival of each task at an ED, it will be offloaded to a selected ES according to the ED's task offloading strategy. 
After the ES completes the inference with its sub-model, it will offload the intermediate features to a successor for further inference.
Additionally, for each ES that holds a sub-model with an early exit branch, the task may exit early if its predicted probability exceeds the confidence threshold.
We use $C=\{c_{h}|E_{h}=1,h\in[1,H]\}$ to represent the setting of the confidence thresholds, where $c_{h}$ is a discrete value specifying the threshold for the exit branch of sub-model $M_{h}$.
The confidence thresholds influence the proportion of tasks exiting early and the overall inference accuracy.
During the task inference process, some simple tasks will exit early at a current sub-model, while the remaining will be offloaded to the next sub-models for further analysis.
We use $I_{h}$ to denote the estimated proportion of remaining tasks in sub-model $M_{h}$, with $I_{h}=1$ indicating no exit branch in sub-model $M_{h}$.
Besides, branches in different sub-models will lead to varying accuracies for task predictions. 
Thus,  given the confidence thresholds in $C$, the estimated overall accuracy is denoted as $\mathcal{A}(C)$.
The highest and lowest achievable inference accuracies across all threshold settings are denoted as $\mathcal{A}_{max}$ and $\mathcal{A}_{min}$, respectively. 
Generally, $\mathcal{A}_{max}$ and $\mathcal{A}_{min}$ correspond to scenarios where all tasks propagate through all sub-models or exit at the earliest branch, respectively.
The task arrival rate of each ES $e^{h}_{j}$, denoted as $\phi^{h}_{j}$, is influenced by both the confidence threshold and the task offloading strategies of its predecessors in $V^{h}_{j}$.
Under the steady condition where the task input and output rates are balanced, the arrival rate of tasks offloading from node $e^{h-1}_{i}$ (\eg, an ED or an ES) to node $e^{h}_{j}$ (\ie, an ES) is $\varphi^{h-1}_{i,j}=p^{h-1}_{i,j} \phi^{h-1}_{i} I_{h-1}$.
Then $\phi^{h}_{j}$ ($e^{h}_{j}\in S$) is given by:
\vspace{-0.1cm}
\begin{equation}
	\phi^{h}_{j} = \sum\limits_{e^{h-1}_{i} \in V^{h}_{j}} \varphi^{h-1}_{i,j} = \sum\limits_{e^{h-1}_{i} \in V^{h}_{j}} p^{h-1}_{i,j} \phi^{h-1}_{i} I_{h-1}
	\label{newphi}
 \vspace{-0.2cm}
\end{equation}
The response delay of a task involves multiple offloading delays of consecutive subtasks, each of which includes transmission delay and computation delay.
Similar to many studies \cite{zhang2016energy, chen2015efficient}, we neglect the time for returning the inference results to EDs, as the size of results is usually small.
The number of floating point operations required for a subtask on sub-model $M_{h}$ is denoted as $\alpha_{h}$, and the input data size of sub-model $M_{h}$ is denoted as $\beta_{h}$.
The transmission delay of a subtask from node $e^{h}_{i}$ ($i\in [1,n_{h}]$) to $e^{h+1}_{j}$ ($j\in [1,n_{h+1}]$) is calculated as:
\vspace{-0.1cm}
\begin{equation}
	T^{cm,h}_{i,j} = \beta_{h+1}/r^{h}_{i,j}
 \vspace{-0.2cm}
\end{equation}


Given the offloading decision $p^{h}_{i,j}$, the required computing resources on node $e^{h+1}_{j}$ for the subtasks offloaded from node $e^{h}_{i}$ is $\xi^{h}_{i,j} = p^{h}_{i,j}\phi^{h}_{i}I_{h}\alpha_{h+1}$.
Accordingly, the required computing resources on $e^{h+1}_{j}$ is expressed as:
\vspace{-0.1cm}
\begin{equation}
	\lambda^{h+1}_{j} = \sum_{e^{h}_{i} \in V^{h+1}_{j}}\xi^{h}_{i,j} = \sum_{e^{h}_{i} \in V^{h+1}_{j}}p^{h}_{i,j}\phi^{h}_{i}I_{h}\alpha_{h+1}
    \label{newlambda}
    \vspace{-0.2cm}
\end{equation}

On basis of the splitting property of Poisson process\cite{altman2011load}, since tasks' arrival on each ED follows a Poisson process, the tasks' arrival on each ES within the set $S^{1}$ also follows the Poisson process.
Then the computing delay at each ES within the set $S^{1}$ can be modeled as an M/D/1-PS queuing system.
Furthermore, according to the properties of the M/D/1-PS queue, the output process remains a Poisson process, which is similar to the input process\cite{haviv2013course}. 
Consequently, the input processes of subsequent ESs still follow the Poisson process, and each of them can be modeled as an M/D/1-PS queuing system.
Therefore, the computing delay for a subtask on each ES $e^{h}_{i}$ is given by:
\vspace{-0.3cm}
\begin{equation}
T^{cp,h}_{i} = \frac{\alpha_{h}}{\mu^{h}_{i} - \lambda^{h}_{j}}	
\vspace{-0.2cm}
\end{equation}

\vspace{-0.1cm}
\subsection{Problem Definition}
As suggested in \cite{ma2023fully}, we divide the system timeline into fixed-length time slots (\eg, 5s), each comprising a configuration update phase and a task offloading phase.
During the configuration update phase, we need to jointly optimize the task offloading strategy and the confidence threshold, which contributes to achieving the trade-off between average response delay and inference accuracy in the task offloading phase. 


The average response delay of tasks involves the average offloading delays of $H$ subtasks.
We employ $T^{h}_{j}$ to denote the average offloading delay of subtasks offloaded from all nodes in $V^{h}_{i}$ to $e^{h}_{j}\in S$, \ie,
\vspace{-0.1cm}
\begin{equation}
T^{h}_{j} = T^{cp,h}_{j} + \frac{1}{\phi^{h}_{j}}\sum_{e^{h-1}_{i}\in V^{h}_{j}} \varphi^{h-1}_{i,j} T^{cm,h-1}_{i,j}
\vspace{-0.1cm}
\end{equation}
The average response delay $T$ of all tasks in the system is:
\vspace{-0.1cm}
\begin{equation}
T = \frac{1}{\Phi} \sum_{e^{h}_{j}\in S}(\frac{\lambda^{h}_{j}}{\mu^{h}_{j} - \lambda^{h}_{j}} +\!\!\!\! \sum_{e^{h-1}_{i}\in V^{h}_{j}}\!\!\!\! \varphi^{h-1}_{i,j} T^{cm,h-1}_{i,j})
\vspace{-0.1cm}
\end{equation}



To trade off the average response delay $T$ and inference accuracy $\mathcal{A}$, we introduce a utility function $U(T,\mathcal{A})$ to evaluate system performance as:
\begin{equation}
	U(T,\mathcal{A}) = a T - (1-a) \frac{(\mathcal{A} - \mathcal{A}_{min})}{(\mathcal{A}_{max} - \mathcal{A}_{min})}
\end{equation}
where $a$ is a weight coefficient used to balance the performance of inference delay and accuracy, and the accuracy $\mathcal{A}$ is the normalized representation.
We need to minimize the system utility function $U$ while satisfying the resource constraint.
The problem can be formalized as follows:
\begin{equation}
\textbf{(P1)}:\min\limits_{P,C} U(T,\mathcal{A}(C))\nonumber
\label{funU}
\end{equation}
\begin{equation}
s.t.\left\{
\begin{aligned}
&\lambda^{h}_{j} < \mu^{h}_{j},  &\forall  e^{h}_{j} \in S\\
&\sum\nolimits_{e^{h+1}_{j} \in L^{h}_{i}} p^{h}_{i,j} = 1, &\forall  e^{h}_{i} \in S\cup D\\
&0 \leq p^{h}_{i,j} \leq 1, &\forall  e^{h}_{i}, e^{h+1}_{j} \in S\cup D
\end{aligned}
\right.
\end{equation}
The first set of inequalities ensures that the total computing requirement on each ES should not exceed its maximum computing capacity. 
The last two conditions ensure that each task will be successfully offloaded to only one ES.
\section{Algorithm Design}\label{sec:alg}

\begin{figure}[!t]
	\centering
	\includegraphics[width=0.85\linewidth]{./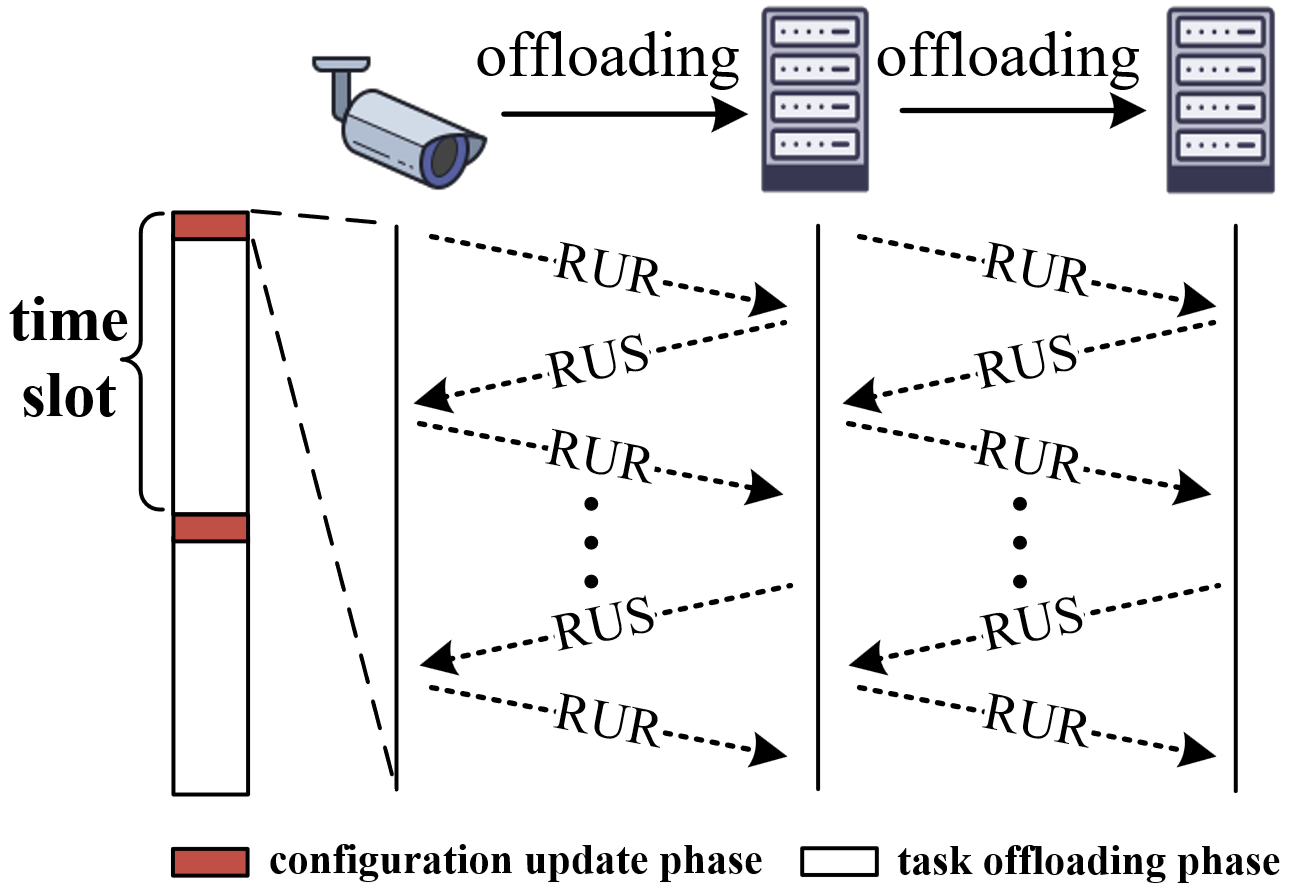}
	\caption{\label{fig:sysprocess} The overview of DTO-EE.}
    \vspace{-5mm}
\end{figure}

We first present an overview of the DTO-EE (Section \ref{section:DTO-EE}), as shown in Fig. \ref{fig:sysprocess}.
During the configuration update phase, the DTO-EE algorithm joint optimizes the task offloading strategy and confidence threshold.
Each edge node, as a task offloader or a task receiver, conducts multiple rounds of local communication with its directly connected nodes.
In each communication round, the DTO-R (Section \ref{section:DTO-R}) and the DTO-O (Section \ref{section:DTO-O}) algorithms guide each receiver and offloader to update the task offloading strategy, respectively.
Specifically, the offloaders send Resource-Utilization-Request (RUR) messages to their respective receivers, who then respond with Resource-Utilization-Status (RUS) messages\cite{ma2023fully}.
Based on the RUR messages from the receivers, each offloader updates its task offloading strategy for the next round.
Additionally, the nodes adjust the confidence threshold of an exit branch at the interval of several communication rounds.

\subsection{Problem Transformation and Optimization Analysis}\label{Problem Transformation}
During the configuration update phase, each node first updates the task offloading strategy through several rounds of local communication without changing confidence thresholds. 
We first consider the problem of optimizing the task offloading strategy with fixed confidence thresholds.
To deal with the problem in \textbf{P1} whose solution may violate the computing resource constraint, we first transform the constraint of \textbf{P1} into its optimization objective by the exterior-point method. 
Specifically, we define the penalty function $N(P)$ as follows:
\vspace{-0.1cm}
\begin{equation}
N(P) = K \cdot \sum\nolimits_{e^{h}_{j}\in S}(\max\{0,\lambda^{h}_{j}-\mu^{h}_{j} + \epsilon \})^{2}
\vspace{-0.1cm}
\end{equation}
where $\epsilon$ is an arbitrarily small positive constant, and $K$ is a sufficiently large positive constant, called the penalty factor. 
By adding this penalty function to the objective function, we define a new objective function $R(P) = T+N(P)$. 
With fixed confidence thresholds, the original problem is transformed into a new optimization problem:
\vspace{-0.2cm}
\begin{equation}
\textbf{(P2)}:\min\limits_{P} R(P)\nonumber
\vspace{-0.2cm}
\end{equation}
\vspace{-0.1cm}
\begin{equation}
s.t.\left\{
\begin{aligned}
&0 \leq p^{h}_{i,j} \leq 1,      &\forall e^{h}_{i}\in S\cup D\\
&\sum\nolimits_{e^{h+1}_{j} \in L^{h}_{i}} p^{h}_{i,j} = 1, &\forall e^{h}_{i}\in S\cup D
\end{aligned}
\right.
\vspace{-0.1cm}
\end{equation}
If the offloading strategy $P$ does not satisfy $\lambda^{h}_{j}<\mu^{h}_{j} - \epsilon$ for any node $e^{h}_{j}\in S$ (h\textgreater0), the value of $R(P)$ will be very large. 
Therefore, by setting a sufficiently large value for $K$, we can approximate the optimal solution for \textbf{P1} by solving \textbf{P2}.
The main challenge to obtain the optimal solution of \textbf{P2} is the incompleteness of information in distributed scenarios. 
The offloading strategy made by a single node without the strategies of other nodes may violate the computing resource constraint.

In addition to updating the offloading strategy, the confidence threshold is adjusted at intervals of several rounds of communication to balance the response delay and inference accuracy.
To estimate the impact of the confidence threshold on delay, we first need to analyze the coupled relationship between the confidence threshold and the task offloading strategy.
The early exit of a task can be regarded as offloading the task to a virtual node, which will not conduct task offloading further.
Therefore, changing the confidence threshold for the exit branch on node $e^{h}_{i}$ is equivalent to changing the probability of offloading tasks to the virtual node, which simultaneously scales the probability of offloading tasks to each node $e^{h+1}_{j}\in L^{h}_{i}$.
We denote the new remaining ratio of the adjusted threshold as $I'_{h}$, then the probability of offloading tasks to each node $e^{h+1}_{j}\in L^{h}_{i}$ changes from $p^{h}_{i,j}$ to $\frac{I'_{h}}{I_{h}}p^{h}_{i,j}$.
The partial derivative of the objective function $R(P)$ \wrt $p^{h}_{i,j}$ is:
\vspace{-0.2cm}
\begin{equation}
    \begin{aligned}
    \label{partial_derivative}
        \hspace{-1mm}
        \frac{\partial R(P)}{\partial p_{i,j}^{h}} &= \frac{\phi^{h}_{i} I_{h}}{\Phi}(\frac{\mu^{h+1}_{j}\alpha_{h+1}}{(\mu^{h+1}_{j}-\lambda^{h+1}_{j})^2 }+\frac{\beta_{h+1}}{r^{h}_{i,j}}+\Omega^{h+1}_{j}\\
                       + & 2K\Phi  \cdot \max\{0, \alpha_{h+1}(\lambda^{h+1}_{j}-\mu^{h+1}_{j}+\epsilon)\})
    \end{aligned}
\vspace{-0.1cm}
\end{equation}
where $\Omega^{h+1}_{j}$ is given by:
\vspace{-0.4cm}
\begin{equation}
    \begin{aligned}
    \label{Omega}
        \hspace{-1mm}
        \Omega^{h+1}_{j} &= \!\!\!\!\sum_{e^{h+2}_{k}\in L^{h+1}_{j}}\!\!\!\! p^{h+1}_{j,k} I_{h+1} (\frac{\mu^{h+2}_{k}\alpha_{h+2}}{(\mu^{h+2}_{k}-\lambda^{h+2}_{k})^2 }+\frac{\beta_{h+2}}{r^{h+1}_{j,k}}\\
                + &\Omega^{h+2}_{k} + 2K\Phi \cdot \max\{0, \alpha_{h+2}(\lambda^{h+2}_{k}-\mu^{h+2}_{k}+\epsilon)\})
    \end{aligned}
    \vspace{-0.1cm}
\end{equation}
We define 
\vspace{-0.3cm}
\begin{equation}
    \begin{aligned}
        \hspace{-1mm}
        \Delta^{h}_{i,j} \triangleq &\frac{\mu^{h+1}_{j}\alpha_{h+1}}{(\mu^{h+1}_{j}-\lambda^{h+1}_{j})^2 }+\frac{\beta_{h+1}}{r^{h}_{i,j}}+\Omega^{h+1}_{j}\\
                       + & 2K\Phi \cdot \max\{0, \alpha_{h+1}(\lambda^{h+1}_{j}-\mu^{h+1}_{j}+\epsilon)\}
        \label{newDelta}
    \end{aligned}
\end{equation}
then, 
\begin{equation}
   \begin{aligned}
       \Omega^{h}_{i} = \sum\nolimits_{e^{h+1}_{j}\in L^{h}_{i}} p^{h}_{i,j} I_{h} \Delta_{i,j}^{h}
        \label{newOmega}
   \end{aligned}
\end{equation}

Therefore, the response delay change caused by changing the confidence threshold on node $e^{h}_{i}$ to a certain extent is:
\begin{equation}
    \begin{aligned}
        \hspace{-1mm}
        \Delta D^{h}_{i} =\!\!\!\!\!\! \sum_{e^{h+1}_{j}\in L^{h}_{i}}\!\!\!\!\!\! \frac{\phi^{h}_{i} I_{h}}{\Phi}\Delta^{h}_{i,j} (\frac{I'_{h}-I_{h}}{I_{h}})p^{h}_{i,j} =  \frac{\phi^{h}_{i}}{\Phi}(\frac{I'_{h}-I_{h}}{I_{h}})\Omega^{h}_{i}
        \label{DeltaD}
    \end{aligned}
\end{equation}
Given the change of inference accuracy $\Delta \mathcal{A}$, the impact of changing the confidence threshold of all nodes in $S^{h}$ on the utility function $U$ is given by:
\begin{equation}
    \begin{aligned}
        \hspace{-1mm}
        \Delta U = a \sum\nolimits_{e^{h}_{i} \in S^{h}} \Delta D^{h}_{i} - (1-a) \Delta \mathcal{A}
        \label{DeltaU}
    \end{aligned}
\end{equation}

Based on the changes in inference accuracy $\mathcal{A}$ and remaining rate $I_{h}$, we can estimate the impact of adjusting the confidence threshold on the utility function.
The inference accuracy and remaining rate can be estimated by a validation dataset.
Specifically, we adopt a reuse-based method that can predetermine the outcomes for all confidence thresholds by a one-shot evaluation. 
Given a sample in the validation dataset, its softmax output from each branch is fixed.
The confidence threshold can only decide whether the sample exits early or not from the branch.
Therefore, we record the softmax outputs of all samples at each branch, and these outputs can be reused to screen out the early-exit samples given different settings of confidence thresholds.
In this way, we only need to evaluate the model once to obtain an accuracy-ratio table, which records the inference accuracy and remaining ratio for each setting of confidence thresholds.
After obtaining the new confidence thresholds, each node $e^{h}_{i}\in S$ can update its remaining ratio $I_{h}$ immediately based on the accuracy-ratio table.

\begin{algorithm}[!t]   
	\caption{Task Receiver Algorithm (DTO-R) for $e^{h}_{j}$} \label{DTO-R} 
	\begin{algorithmic}[1]
	  \State Receive RUR from $V^{h}_{j}$ to get $\xi_{\cdot,j}^{h-1,t}$,$C$
        \State Synchronize $C$
        \State Calculate $\lambda_{j}^{h,t}$ by Eq. (\ref{newlambda})
        \State Calculate $\phi^{h,t}_{j} = \lambda^{h,t}_{j}/\alpha_{h}$
        \State Get $\Omega^{h,t}_{j}$\! from Algorithm \ref{DTO-O}
        \State Broadcast RUS including $\lambda_{j}^{h,t}, \Omega^{h,t}_{j}, \mu^{h}_{j}$, $C$ to $V^{h}_{j}$
	\end{algorithmic} 
\end{algorithm}

\subsection{Task Receiver Algorithm (DTO-R)}\label{section:DTO-R}
This section describes the task receiver algorithm DTO-R (Algorithm \ref{DTO-R}), which is concurrently executed by edge servers in each communication round during the configuration update phase.
For ease of description, the subscript $t$ is introduced to indicate the $t$-th communication round.
In communication round $t$, a receiver $e^{h}_{j}\in S$ collects RUR messages from its offloaders in $V^{h}_{j}$ (Algorithm \ref{DTO-R}, Line 1).
An RUR message from the offloader $e^{h-1}_{i}$ includes the confidence thresholds $C$ and the required computing resources on receiver $e^{h}_{j}$ for $e^{h-1}_{i}$, denoted as $\xi^{h-1,t}_{i,j}$. 
The thresholds $C$ and the vector $\xi^{h-1,t}_{\cdot,j} = \{\xi^{h-1,t}_{i,j} | e^{h-1}_{i} \in V^{h}_{j}\}$ constitute the received RURs on receiver $e^{h}_{j}$.
Then, the receiver $e^{h}_{j}$ synchronizes the confidence thresholds $C$ and estimates the total required computing resources $\lambda^{h,t}_{j}$ (Algorithm \ref{DTO-R}, Lines 2-3), and finally responds with RUS messages (Algorithm \ref{DTO-R}, Line 6).
An RUS message includes $\lambda^{h,t}_{j}$, the maximum computing capacity $\mu^{h,t}_{j}$, the confidence thresholds $C$ and the gradient information $\Omega^{h,t}_{j}$.
Note that the gradient information $\Omega^{h,t}_{j}$ reflects the degree of delay fluctuation on the basis of the current load, which is used to guide the update direction of the task offloading strategy.
$\Omega^{h,t}_{j}$ is obtained when the node $e^{h}_{j}$ runs the Algorithm \ref{DTO-O} (Section \ref{section:DTO-O}), and $\Omega^{H,t}_{j}=0$ for each node $e^{H}_{j} \in S^{H}$.

\subsection{Task Offloader Algorithm (DTO-O)}\label{section:DTO-O}

This section describes the task offloader algorithm DTO-O (Algorithm \ref{DTO-O}), which is concurrently executed by all edge nodes not in $S^{H}$ in each communication round during the configuration update phase.

In the $t$-th communication round, the offloader $e^{h}_{i}$ receives the RUS message from each receiver $e^{h+1}_{j} \in L^{h}_{i}$, which includes $\lambda_{j}^{h+1,t}$, $\Omega^{h+1,t}_{j}$, $C$ and $\mu^{h+1}_{j}$ (Algorithm \ref{DTO-O}, Line 1).
Based on the received information, $e^{h}_{i}$ synchronizes the confidence thresholds $C$ and prepares to update its task offloading strategy.
To reduce the response delay, the offloader tends to increase the probability of offloading subtasks to the most idle receiver.
We use $\Delta_{i,j}^{h,t}$ as the repulsive factor of offloader $e^{h}_{i}$ to receiver $e^{h+1}_{j}$ in the $t$-th round, which reflects the effect of the offloading probability $p^{h,t}_{i,j}$ on the response delay.
After calculating the repulsive factor $\Delta_{i,j}^{h,t}$ for each receiver $e^{h+1}_{j} \in L^{h}_{i}$ by Eq. \eqref{newDelta}, offloader $e^{h}_{i}$ obtains its gradient information $\Omega^{h,t}_{i}$ by Eq. \eqref{newOmega}, which reflects the computing load of subsequent nodes (Algorithm \ref{DTO-O}, Lines 3-4). 


\begin{algorithm}[!t] 
	\caption{Task Offloader Algorithm (DTO-O) for $e^{h}_{i}$} 
	\label{DTO-O} 
	\begin{algorithmic}[1]
		\State Receive RUS from $L^{h}_{i}$ to get $\lambda_{j}^{h+1,t}$, $\Omega^{h+1,t}_{j}$, $C$, $\mu^{h+1}_{j}$ of each  $e^{h+1}_{j}$
        \State Synchronize $C$
		\State Calculate $\Delta_{i,j}^{h,t}$ for each $e^{h+1}_{j} \in L^{h}_{i}$ by Eq. (\ref{newDelta})
        \State Calculate $\Omega^{h,t}_{i}$ by Eq. (\ref{newOmega})
		\State Get new offloading strategy $p^{h,t+1}_{i,\cdot}$ by Eq. (\ref{newp})
        \State Get $\phi^{h,t}_{i}$ from Algorithm \ref{DTO-R}
		\For{Each $e^{h+1}_{j} \in L^{h}_{i}$}
		\State Calculate $\xi_{i,j}^{h,t+1} = p_{i,j}^{h,t+1} \phi^{h,t}_{i} I_{h} \alpha_{h+1}$
		\State Send RUR including $\xi_{i,j}^{h,t+1}$, $C$ to $e^{h+1}_{j}$
		\EndFor
	\end{algorithmic} 
\end{algorithm}
The offloaders tend to assign higher offloading probabilities to the receivers with lower repulsive factors. 
Note that as long as the receiver $e^{h+1}_{j}$ is overloaded, $\Delta_{i,j}^{h,t}$ will be very large and $e^{h}_{i}$ will reduce the probability of offloading subtasks to $e^{h+1,}_{j}$.
We define $e^{h+1}_{j^{*}}$ as the receiver in $L^{h}_{i}$ that minimizes the value of $\Delta_{i,j}^{h,t}$, offloader $e^{h}_{i}$ updates the offloading strategy $p_{i,\cdot}^{h,t+1}$ by the following formula (Algorithm \ref{DTO-O}, Line 5):
\begin{equation}
	\left\{
	\begin{aligned}
		& p_{i,j}^{h,t+1} = (1-\tau_{p}) p_{i,j}^{h,t},     e^{h+1}_{j} \in L^{h}_{i}, j \neq j^{*}\\
		& p_{i,j^{*}}^{h,t+1} = p_{i,j^{*}}^{h,t} + \tau_{p} \sum\nolimits_{j\neq j^{*}}p_{i,j}^{h,t},    e^{h+1}_{j} \in L^{h}_{i}
	\end{aligned}
    \label{newp}
	\right.
\end{equation}
where $\tau_{p} \in(0,1]$ is the step size, and it determines the convergence speed of the offloading strategy.
After obtaining the new offloading strategy, $e^{h}_{i}$ gets its task arrival rate $\phi^{h,t}_{i}$  from Algorithm \ref{DTO-R} if $e^{h}_{i} \in S$.
Then it calculates the required computing resources in the next round based on the new offloading strategy and sends RUR messages to its receivers (Algorithm \ref{DTO-O}, Lines 8-9).

\subsection{Joint Optimization of Offloading Strategy and Confidence Threshold}\label{section:DTO-EE}
The DTO-EE algorithm is developed to jointly optimize the task offloading strategy and the confidence threshold during the configuration update phase.
At the beginning of the DTO-EE algorithm, each offloader node $e^{h}_{i} \in D\cup S/S^{H}$ takes the uniform offloading probability as the initial value, $\ie$,  $p^{h,0}_{i,j} = 1/|L^{h}_{i}|, \forall e^{h+1}_{j}\in L^{h}_{i}$.
Then it sends the initial RUR messages containing the confidence thresholds $C$ and the requested computing resources $\xi^{h,0}_{i,j} = p^{h,0}_{i,j}\phi^{h}_{i}I_{h}\alpha_{h+1}$ to each receiver node $e^{h+1}_{j} \in L^{h}_{i}$ (Algorithm \ref{DTO-EE}, Line 1).

\begin{algorithm}[!t] 
	\caption{Joint Optimization Algorithm (DTO-EE)} \label{DTO-EE} 
	\begin{algorithmic}[1]
        \Statex \textbf{Input:} The total communication round $n$, the update frequency $m$ 
        \Statex \textbf{Output:} The task offloading strategy $P$, the confidence thresholds $C$
        \State Each offloader initializes its task offloading strategy and sends initialized RUR messages to its receivers
        \For {Communication round $t=0,1,...,n$}
        \State All receivers and offloaders concurrently execute the Algorithm \ref{DTO-R} and Algorithm \ref{DTO-O}, respectively
        \State $h=(t/m)\% H$
        \If {$t\% m=0$ and $E_{h}=1$}
        \State Each offloader $e^{h}_{i}$ calculates $\Delta D^{h,t}_{i}$ by Eq. \eqref{DeltaD}
        \State Each offloader $e^{h}_{i}$ shares $\Delta D^{h,t}_{i}$ with each other and calculates $\Delta U$ by Eq. \eqref{DeltaU} 
        \State Select new confidence threshold that minimizes $\Delta U$
        \EndIf
        
        \EndFor
	\end{algorithmic} 
\end{algorithm}

In each subsequent communication round, to reduce the average response delay, all receivers concurrently execute the DTO-R algorithm, while all offloaders concurrently execute the DTO-O algorithm to update the task offloading strategy (Algorithm \ref{DTO-EE}, Line 3).
Besides, the confidence threshold is dynamically adjusted at the interval of several communication rounds to balance the response delay and inference accuracy. 
Given the update frequency $m$ and the step size $\tau_{c}$ for the confidence threshold, nodes cyclically update the confidence threshold for each branch.
Specifically, in every $m$ rounds of communication (\ie, when $t\%m=0$), if sub-model $M_{h}$ ($h=(t/m)\%H$) has an early exit branch, 
each node $e^{h}_{i}\in S^{h}$ calculates the impact of adjusting the confidence threshold $c_{h}$ by one step $\tau_{c}$ on the response delay $\Delta D^{h,t}_{i}$ (Algorithm \ref{DTO-EE}, Line 6).
Then, each node $e^{h}_{i} \in S^{h}$ shares $\Delta D^{h,t}_{i}$ with each other, and calculates the overall impact $\Delta U$ of increasing or decreasing the confidence threshold $c_{h}$ by one step $\tau_{c}$ on the system (Algorithm \ref{DTO-EE}, Line 7).
If the increase or decrease of the confidence threshold by one step $\tau_{c}$ can make $\Delta U \textless 0$, all node $e^{h}_{i}\in S^{h}$ make the same adjustment to the confidence threshold, and the adjusted threshold is sent to other nodes through RUR/RUS messages.
Otherwise, the nodes will not change the confidence threshold.
\subsection{Convergence Analysis}\label{sec:analysis}
Given the determined early exit confidence thresholds, we will analyze that our task offloading strategy can converge to an optimum.
We define $P^{t} = \{p^{h,t}_{i}|e^{h}_{i}\in S\cup D \}$ as the task offloading strategies of all edge nodes in the $t$-th round.
We use $\Gamma$ to represent the updating of the task offloading strategy in each round, \ie,
\vspace{-0.1cm}
\begin{equation}
   \begin{aligned}
       P^{t+1} = \Gamma(P^{t})
   \end{aligned}
   \vspace{-0.1cm}
\end{equation}
if $P^{t+1} = P^{t}$, $P^{t}$ is the fixed point of the updating, which is denoted as $P^{*}$. 
In the $t$-th round, the objective function $R(P)$ can be regarded as a continuous function of $p^{h,t}_{i,j}$.

\begin{lemma}
	If $P^{t} \neq P^{*}$, the updating $\Gamma$ gives a gradient descent direction of $R(P^{t})$ at point $P^{t}$, $\ie$,
 \vspace{-0.1cm}
\begin{equation}
   \begin{aligned}
       \langle\nabla R(P^{t}), \Gamma(P^{t}) - P^{t}\rangle \textless 0
   \end{aligned}
   \vspace{-0.1cm}
\end{equation}    
	where $\nabla$ is the gradient operator, and $\langle a, b \rangle$ represents the inner product of vectors a and b.
\end{lemma}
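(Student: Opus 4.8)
The plan is to expand the inner product $\langle \nabla R(P^{t}), \Gamma(P^{t}) - P^{t}\rangle$ into a sum of independent per-offloader contributions and show that each is nonpositive, with at least one strictly negative whenever $P^{t}\neq P^{*}$. Both ingredients are already in hand: the components of $\nabla R(P^{t})$ are given by Eq.~\eqref{partial_derivative}, namely $\frac{\partial R(P)}{\partial p^{h}_{i,j}} = \frac{\phi^{h}_{i} I_{h}}{\Phi}\Delta^{h}_{i,j}$, while the components of the displacement $\Gamma(P^{t})-P^{t}$ are read directly off the update rule~\eqref{newp}. The key structural fact is that \eqref{newp} merely redistributes probability mass \emph{within} each offloader $e^{h}_{i}$: it scales $p^{h}_{i,j}$ down by $\tau_{p} p^{h}_{i,j}$ for every $j\neq j^{*}$ and dumps the removed mass onto $p^{h}_{i,j^{*}}$. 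Hence the displacement vector is supported on each offloader's receiver set $L^{h}_{i}$ and sums to zero there, so the global inner product decouples into a sum over offloaders $e^{h}_{i}\in S\cup D\setminus S^{H}$.

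First I would isolate the contribution of a single offloader. Writing $d^{h}_{i,j}=p^{h,t+1}_{i,j}-p^{h,t}_{i,j}$, we have $d^{h}_{i,j}=-\tau_{p}p^{h,t}_{i,j}$ for $j\neq j^{*}$ and $d^{h}_{i,j^{*}}=\tau_{p}\sum_{j\neq j^{*}}p^{h,t}_{i,j}$. Substituting the gradient components and factoring the common positive constant $\frac{\phi^{h}_{i}I_{h}\tau_{p}}{\Phi}$, the per-offloader term collapses to
\begin{equation}
\frac{\phi^{h}_{i}I_{h}\tau_{p}}{\Phi}\sum_{j\neq j^{*}}p^{h,t}_{i,j}\bigl(\Delta^{h,t}_{i,j^{*}}-\Delta^{h,t}_{i,j}\bigr).
\end{equation}
Since $j^{*}$ is by definition the receiver in $L^{h}_{i}$ minimizing $\Delta^{h,t}_{i,j}$, each bracket $\Delta^{h,t}_{i,j^{*}}-\Delta^{h,t}_{i,j}$ is nonpositive, while $p^{h,t}_{i,j}\ge 0$ and the prefactor is strictly positive; so every offloader contributes a nonpositive amount and the full inner product is $\le 0$.

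The hard part will be upgrading this ``$\le 0$'' to the strict ``$<0$'' in the statement, and this is exactly where the hypothesis $P^{t}\neq P^{*}$ must be used. I would first characterize the fixed point: because $\tau_{p}>0$, the condition $\Gamma(P^{t})=P^{t}$ forces $p^{h,t}_{i,j}=0$ for every $j\neq j^{*}$ and every offloader, \ie each offloader concentrates all its mass on its $\Delta$-minimizing receiver. Thus $P^{t}\neq P^{*}$ guarantees some offloader with $\sum_{j\neq j^{*}}p^{h,t}_{i,j}>0$, making its sum contain a term with $p^{h,t}_{i,j}>0$. What remains is to ensure the matching bracket is \emph{strictly} negative, \ie that $j^{*}$ strictly beats the active receivers. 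The delicate case is a tie: a non-fixed point could have an offloader spreading mass across receivers sharing the minimal $\Delta$, contributing exactly zero. I would dispose of this by invoking genericity of the parameters (the transmission terms $\beta_{h+1}/r^{h}_{i,j}$ and the congestion terms differ across distinct receivers, so ties among active receivers are non-generic), or equivalently by folding configurations on which all active receivers attain the minimal $\Delta$ into the fixed-point set identified with $P^{*}$. Under either reading a strictly negative term survives, yielding $\langle \nabla R(P^{t}),\Gamma(P^{t})-P^{t}\rangle<0$ and confirming that $\Gamma$ points along a descent direction of $R$ at $P^{t}$.
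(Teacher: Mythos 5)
Your proof is correct and follows essentially the same route as the paper's: the same gradient expression $\frac{\partial R(P^{t})}{\partial p^{h,t}_{i,j}} = \frac{\phi^{h}_{i}I_{h}}{\Phi}\Delta^{h,t}_{i,j}$, the same decoupling of the inner product into per-offloader sums, and the same use of the minimality of $j^{*}$; your compact per-offloader form $\frac{\phi^{h}_{i}I_{h}\tau_{p}}{\Phi}\sum_{j\neq j^{*}}p^{h,t}_{i,j}\bigl(\Delta^{h,t}_{i,j^{*}}-\Delta^{h,t}_{i,j}\bigr)$ is exactly what the paper's chain of displayed manipulations reduces to after cancellation. The two arguments part ways only at the strictness step. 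The paper asserts that strict convexity of $R$ together with $P^{t}\neq P^{*}$ yields some receiver with $\Delta^{h,t}_{i,j}>\Delta^{h,t}_{i,j^{*}}$ and stops there; it never verifies that such a receiver also carries positive mass $p^{h,t}_{i,j}>0$, which is what the strict inequality actually requires. You instead characterize the fixed points of $\Gamma$ (each offloader concentrating all mass on its $\Delta$-minimizing receiver), note that $P^{t}\neq P^{*}$ guarantees positive mass off $j^{*}$ for some offloader, and then explicitly confront the tie case, where an offloader spreads mass over several receivers all attaining the minimal $\Delta$, the contribution is exactly zero, and strict negativity can genuinely fail. Your two escape routes (genericity of the transmission/congestion parameters, or absorbing tied configurations into the fixed-point set identified with $P^{*}$) are patches the paper does not supply; its appeal to strict convexity does not by itself exclude mass-carrying ties. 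So your proposal is not only faithful to the paper's argument but is more candid about its one weak joint; both treatments remain informal exactly there, but yours names the failure mode precisely.
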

\begin{proof}
By Eqs. \eqref{partial_derivative} and \eqref{Omega}, we compute the partial derivative of $R(P^{t})$ \wrt $p_{i,j}^{h,t}$ as: 
\vspace{-0.1cm}
\begin{equation}
   \begin{aligned}
       \frac{\partial R(P^{t})}{\partial p_{i,j}^{h,t}} = \frac{\phi^{h}_{i} I_{h}}{\Phi}\Delta_{i,j}^{h,t}
   \end{aligned}
   \vspace{-0.1cm}
\end{equation}
Therefore,
\vspace{-0.1cm}
\begin{equation}
    \begin{aligned}
        &\langle\nabla R(P^{t}), \Gamma(P^{t}) - P^{t}\rangle \\
        &= \frac{1}{\Phi}\sum_{p^{h,t}_{i,j}\in P^{t}} \phi^{h}_{i} I_{h} \Delta_{i,j}^{h,t} (p_{i,j}^{h,t+1}-p_{i,j}^{h,t}) \\
        &= \frac{1}{\Phi}\sum_{e^{h}_{i}\in S\cup D} \phi^{h}_{i} I_{h} \sum_{e^{h+1}_{j}\in L^{h}_{i}}\Delta_{i,j}^{h,t}(p_{i,j}^{h,t+1}-p_{i,j}^{h,t})
    \end{aligned}
    \vspace{-0.1cm}
\end{equation}

According to Section \ref{section:DTO-O}, for offloader $e^{h}_{i}$, $e^{h+1}_{j^*}$ is the receiver that minimizes $\Delta^{h,t}_{i,j}$, so $\Delta_{i,j^*}^{h,t} \leq \Delta_{i,j}^{h,t}$ for any $e^{h+1}_{j}\in L^{h}_{i}$, and then,
\vspace{-0.1cm}
\begin{equation}
    \begin{aligned}
        &\sum_{e^{h+1}_{j} \in L^{h}_{i}} \!\!\! \Delta_{i,j}^{h,t}p_{i,j}^{h,t+1} \!\!= 
        \Delta^{h,t}_{i,j^*}p_{i,j*}^{h,t+1}\!\! +\!\!\!\! \sum_{\substack{e^{h+1}_{j} \in L^{h}_{i}\\
        j\neq j^{*}}} \!\! \Delta_{i,j}^{h,t}p_{i,j}^{h,t+1}\\
        &=\Delta^{h,t}_{i,j^*}(p_{i,j*}^{h,t} \!\! + \tau_{p} \!\! \!\sum_{\substack{e^{h+1}_{j} \in L^{h}_{i}\\j\neq j^{*}}} \!\!p_{i,j}^{h,t})\!\!+\!\! \sum_{\substack{e^{h+1}_{j} \in L^{h}_{i}\\j\neq j^{*}}} \!\! \Delta_{i,j}^{h,t}(1-\tau_{p})p_{i,j}^{h,t}\\
        &=\!\!\!\!\sum_{e^{h+1}_{j} \in L^{h}_{i}} \!\!\! \Delta_{i,j}^{h,t}p_{i,j}^{h,t} \!\!+\! \tau_{p} (\Delta_{i,j^*}^{h,t} \!\!\sum_{\substack{e^{h+1}_{j} \in L^{h}_{i}\\j\neq j^{*}}} \!\!\!\!\! p_{i,j}^{h,t} - \!\!\!\! \sum_{\substack{e^{h+1}_{j} \in L^{h}_{i}\\j\neq j^{*}}} \!\!\!\! \Delta_{i,j}^{h,t}p_{i,j}^{h,t})\\
        &\leq \sum_{e^{h+1}_{j} \in L^{h}_{i}}\Delta_{i,j}^{h,t}p_{i,j}^{h,t}  
    \end{aligned}
    \vspace{-0.1cm}
\end{equation}
Since $R(P^{t})$ is strictly convex and $P^{t} \neq P^{*}$, there is at least one node $e^{h+1}_{j}\in L^{h}_{i}$ satisfying $\Delta^{h,t}_{i,j} \textgreater \Delta^{h,t}_{i,j*}$. 
Therefore, $\sum_{e^{h+1}_{j} \in L^{h}_{i}}\Delta_{i,j}^{h,t}p_{i,j}^{h,t+1} \textless \sum_{e^{h+1}_{j} \in L^{h}_{i}}\Delta_{i,j}^{h,t}p_{i,j}^{h,t}$, and $\langle\nabla R(P), \Gamma(P^{t}) - P^{t}\rangle \textless 0$.
\vspace{-0.1cm}
\end{proof}
Since $R(P^{t})$ is differentiable on $\tau_{p}$, if $P^{t} \neq P^{*}$, there exit $\tau_{p} \in (0,1]$ such that $R(P^{t+1}) \textless R(P^{t})$.
Therefore, $R(P^{t})$ is a monotonically decreasing sequence for $t$, and $R(P^{t}) \textgreater 0$, then $R(P^{t})$ will finally converge.
Due to the space limitation, the detailed proof is omitted.


\section{Experimentation and Evaluation}\label{sec:results}

\subsection{Experimental Settings}\label{Simulation_Settings}

\textbf{System Deployment.}
The simulation experiments are conducted on an AMAX deep learning workstation equipped with an Intel(R) Xeon(R) Gold 5218R CPU, 8 NVIDIA GeForce RTX 3090 GPUs and 256 GB RAM.
The system timeline is divided into 5s time slots, incorporating a 100ms configuration update phase, and the local communication delay is set at 2ms\cite{ma2023fully}.
We employ 50 EDs to generate tasks following a Poisson process and deploy each sub-model across 4-6 ESs (distribution skewed towards fewer ESs for later sub-models due to the early exit mechanism), and each offloader node is assigned 2-4 receiver nodes for task offloading.
The computing capacity of each ES is obtained from 3 mainstream devices, \ie, Jetson TX2, Jetson NX, and Jetson AGX. 
Each device can be configured to work with different modes, specifying the number of working CPUs and the frequency of CPUs/GPUs.
Devices operating in different modes exhibit different capabilities. 
Concretely, the fastest mode (\ie, mode 0 of AGX) achieves inference speeds approximately 5× faster than the slowest mode (\ie, mode 1 of TX2). 
We employ the recorded computing capacity of these devices in the simulation framework to simulate the heterogeneous system.
Besides, the bandwidth from ED to ES is set to vary from 1MB/s to 10MB/s\cite{teng2022game}, while the bandwidth between ESs is set to vary from 10MB/s to 20MB/s.

\begin{table}[t]

	\caption{The Parameters Details of Sub-models.}
    \vspace{2mm}
	\label{model_partition}
	\centering
        \setlength{\tabcolsep}{1mm}
        \scalebox{0.9}{
	\begin{tabular}{c|c|ccccc}
		\hline
		\multirow{2}{*}{Model} & \multirow{2}{*}{Parameter} & \multicolumn{5}{c}{Sub-model} \\
		\cline{3-7}
		&&\textbf{$h_{1}$}& \textbf{$h_{2}$}& \textbf{$h_{3}$}& \textbf{$h_{4}$}& \textbf{$h_{5}$}\\
		\hline		
		\multirow{3}{*}{ResNet101} & $\alpha$(GFLOPs) & 2.21 & 1.97 & 1.97 & 1.68 & - \\
        & $\beta$(MB) & 0.14 & 0.77 & 0.77 & 0.77 & -\\
        & $\mathcal{A}$ & - & 0.470 & 0.582 & 0.681 & -\\
		\hline
		\multirow{3}{*}{BERT} & $\alpha$(GFLOPs) & 6.44 & 8.05 & 8.08 & 8.08 & 8.08 \\
		& $\beta$(MB) & 0.01 & 0.56 & 0.56 & 0.56 & 0.56\\
        & $\mathcal{A}$ & - & 0.552 & 0.568 & 0.572 & 0.582\\
		
		\hline
		
	\end{tabular}
       
 }
\end{table}

\textbf{Datasets and Models.}
We choose two classical datasets and two DNN models for performance evaluation.
1) \textit{ImageNet}\cite{russakovsky2015ImageNet} is a dataset for image recognition that consists of 1,281,167 training images, 50,000 validation images and 100,000 test images from 1,000 categories.
We adopt a famous model ResNet101 with a size of 171MB\cite{he2016deep} for image recognition. 
2) \textit{Tnews} \cite{xu2020clue} is a dataset for news text classification that consists of 53,360 training news, 10,000 validation news and 10,000 test news from 15 categories.
We adopt a famous large model Bert-large (denoted as Bert) with a size of 1.3G \cite{devlin2018bert} for the text classification. 

These models are pre-trained and divided into multiple sub-models, and exit branches are set on some sub-models.
Concretely, We divide ResNet101 into 4 sub-models and add exit branches on the second and third sub-models. Besides, we divide Bert into 5 sub-models and add exit branches on the second, third, and fourth sub-models.
Table 2 summarizes the computing resources required and the input size for inference in each sub-model, as well as the inference accuracy of all branches and the complete model.

\textbf{Baselines.}
We compare DTO-EE with four baselines.
\begin{itemize}
\item \textbf{Computing-First (CF)}: A heuristic task offloading algorithm, where each offloader offloads tasks in proportion to the computing capacity of receivers.

\item \textbf{Bandwidth-First (BF)}: A heuristic task offloading algorithm. where tasks are offloaded proportionally to the bandwidth resources of receivers.

\item \textbf{NGTO}\cite{tiwary2018response}: A non-cooperative game based task offloading algorithm where each offloaded updates its offloading strategy circularly. Each offloader performs a selfish decision based on the strategies of other offloaders,  passing these updates to subsequent offloaders. This process continues until a Nash equilibrium is achieved.

\item \textbf{GA}\cite{peng2024collaborative}: A genetic algorithm is executed by each ED, which continuously collects the state information of all nodes and searches out the path with the shortest delay for task offloading.
\end{itemize}

We adaptively adjust confidence thresholds across all baselines to balance delay and accuracy, which is set to the same update frequency and step size as DTO-EE.

\textbf{Metrics.}
We adopt the following metrics to evaluate the performance of DTO-EE and the baselines.
1) \textbf{Average response time} reflects the response rate of the collaborative inference framework, and is measured by the average delay of all samples from its arrival at the ED to the completion of inference.
2) \textbf{Inference accuracy} is calculated by the proportion of the tasks correctly predicted by the models to all arriving tasks, and is mainly affected by the confidence thresholds.

\subsection{Overall Performance}

\begin{figure}[!t]
	\centering
	\subfigure[Average response delay]
	{
		\includegraphics[width=0.46\linewidth,height=3.5cm]{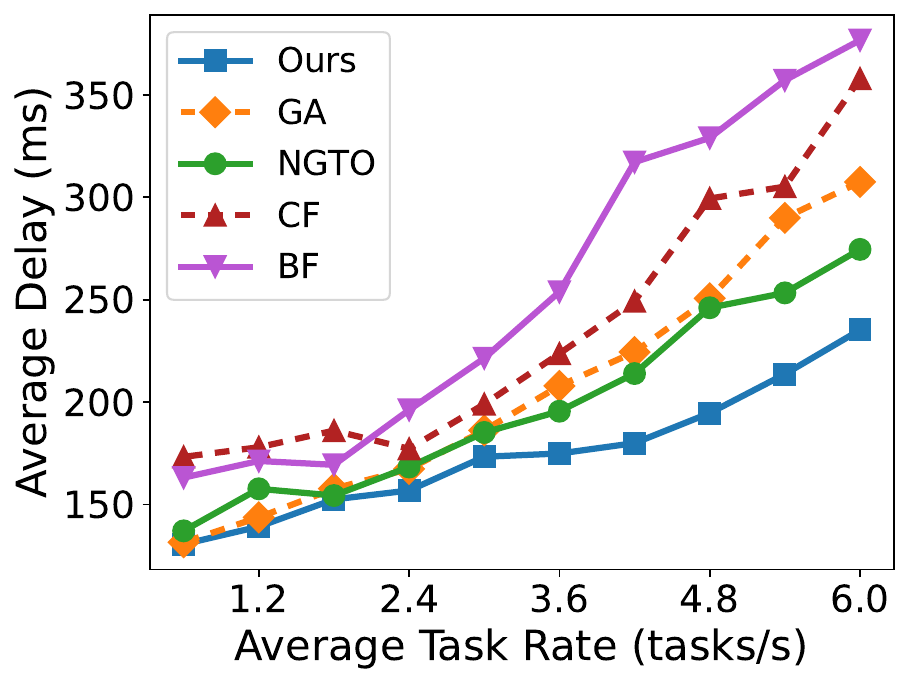}
		\label{fig:arrival rate1}
	}
	\subfigure[Inference accuracy]
	{
		\includegraphics[width=0.46\linewidth,height=3.5cm]{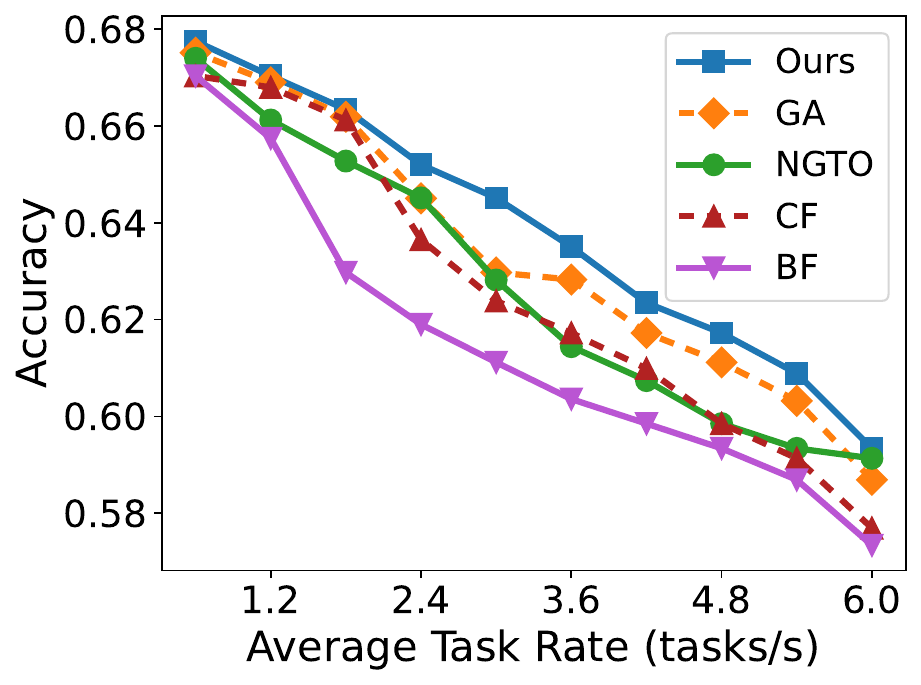}
		\label{fig:arrival rate2}
	}
 
	\caption{Inference performance of algorithms given different task arrival rates for ResNet101 on ImageNet.}
	\vspace{-0.2cm}
	\label{fig:arrival rate ImageNet}
\end{figure}

\begin{figure}[!t]
	\centering
	\subfigure[Average response delay]
	{
		\includegraphics[width=0.46\linewidth,height=3.5cm]{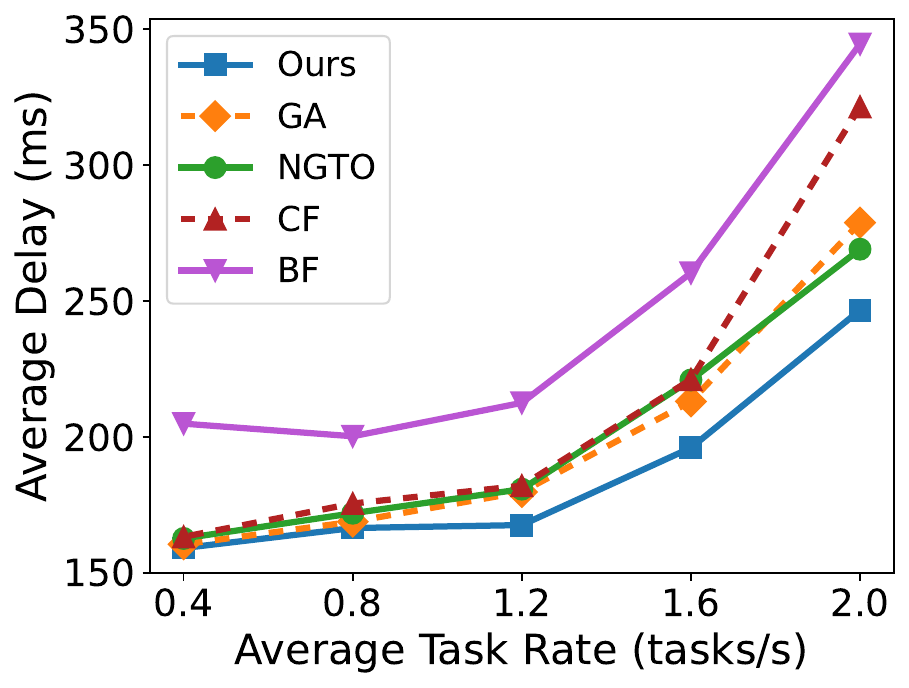}
		\label{fig:arrival rate3}
	}
	\subfigure[Inference accuracy]
	{
		\includegraphics[width=0.46\linewidth,height=3.5cm]{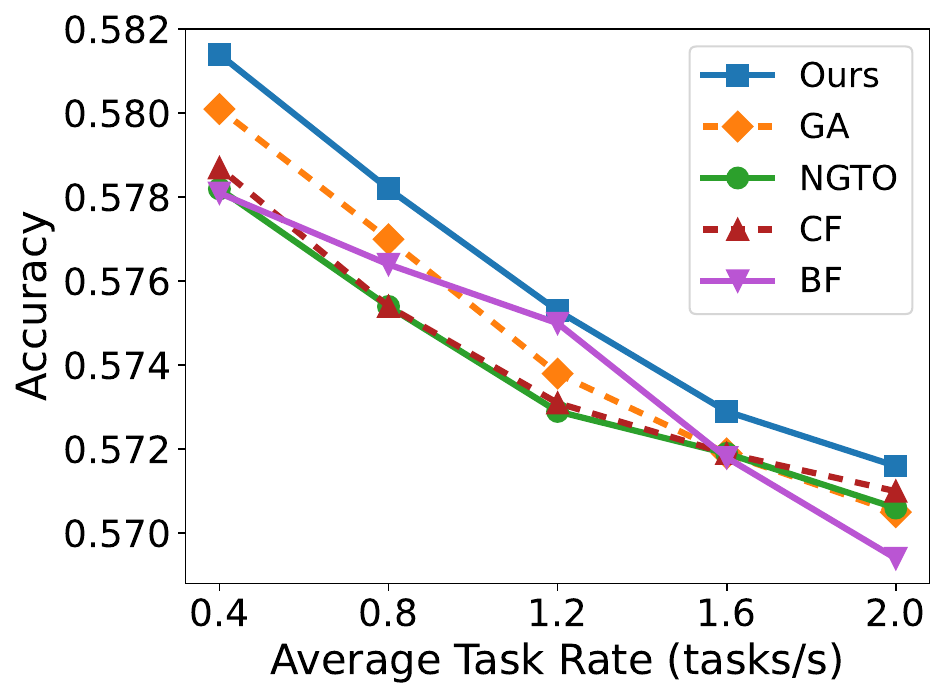}
		\label{fig:arrival rate4}
	}
 
	\caption{Inference performance of algorithms given different task arrival rates for Bert on Tnews.}
	\vspace{-0.4cm}
	\label{fig:arrival rate tnews}
        
\end{figure}

 

 

\begin{figure}[t]
	\centering
	\subfigure[Average response delay]
	{
		\includegraphics[width=0.46\linewidth,height=3.5cm]{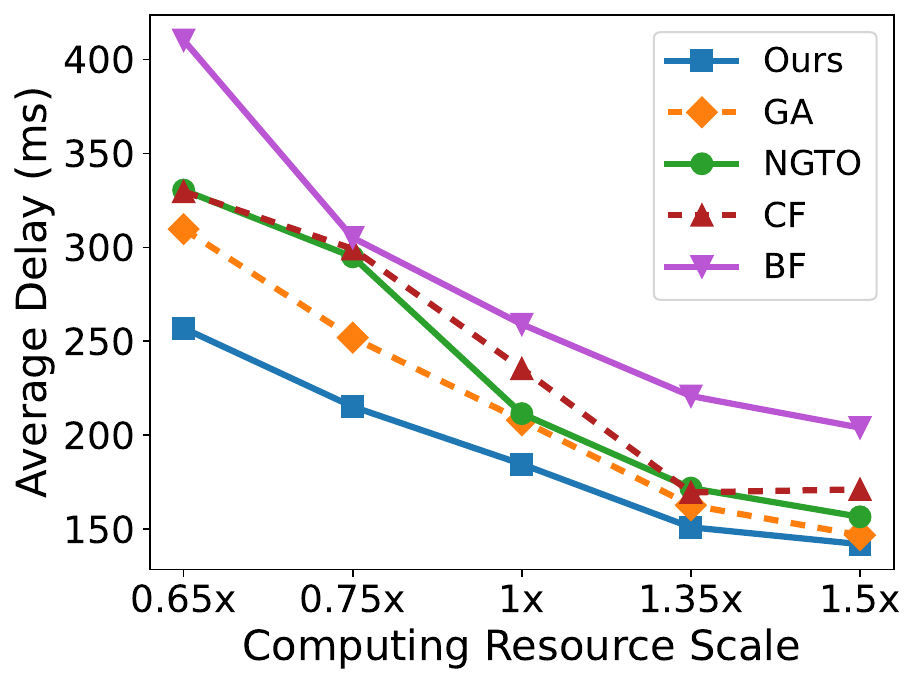}
		\label{fig:cmp1}
	}
	\subfigure[Inference accuracy]
	{
		\includegraphics[width=0.46\linewidth,height=3.5cm]{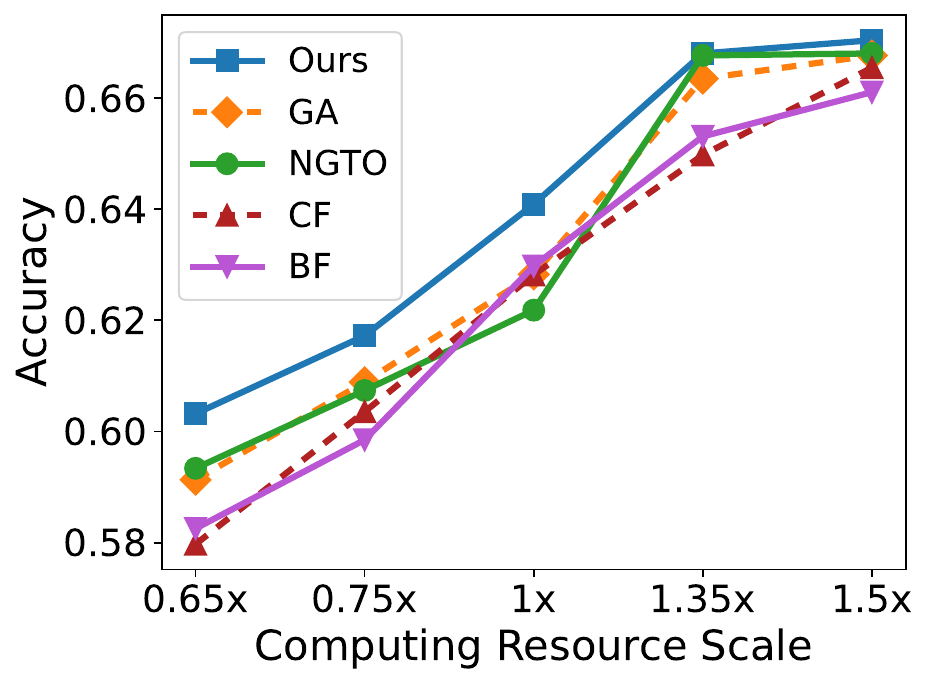}
		\label{fig:cmp2}
	}
    \caption{Inference performance with varying average computing resource for RestNet101 on ImageNet.}
    \label{fig:cmp ImageNet}
    \vspace{-2mm}
\end{figure}

\begin{figure}[t]
	\centering
        \subfigure[Average response delay]
	{
		\includegraphics[width=0.46\linewidth,height=3.5cm]{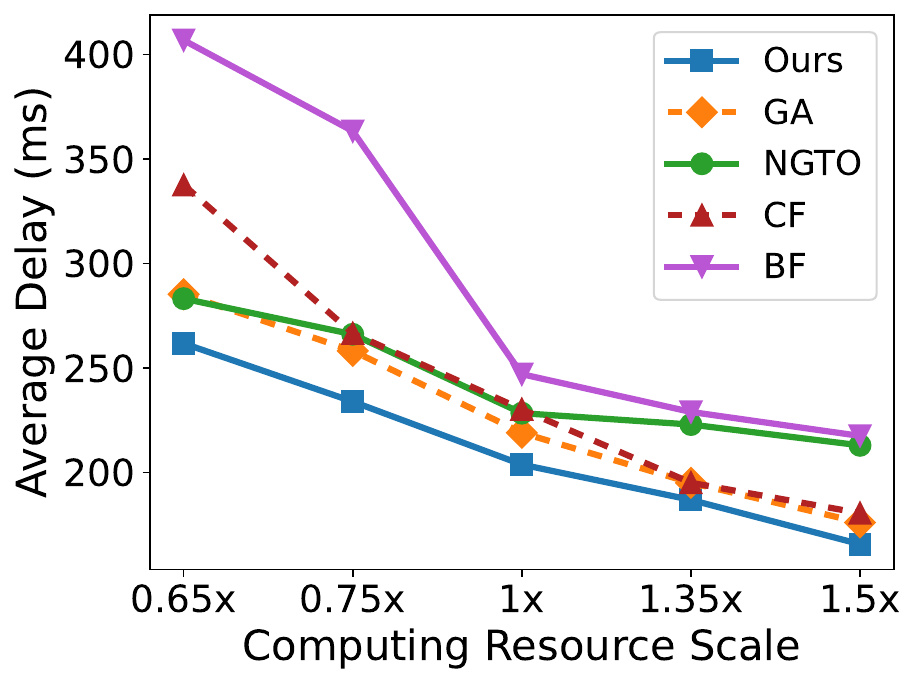}
		\label{fig:cmp3}
	}
	\subfigure[Inference accuracy]
	{
		\includegraphics[width=0.46\linewidth,height=3.5cm]{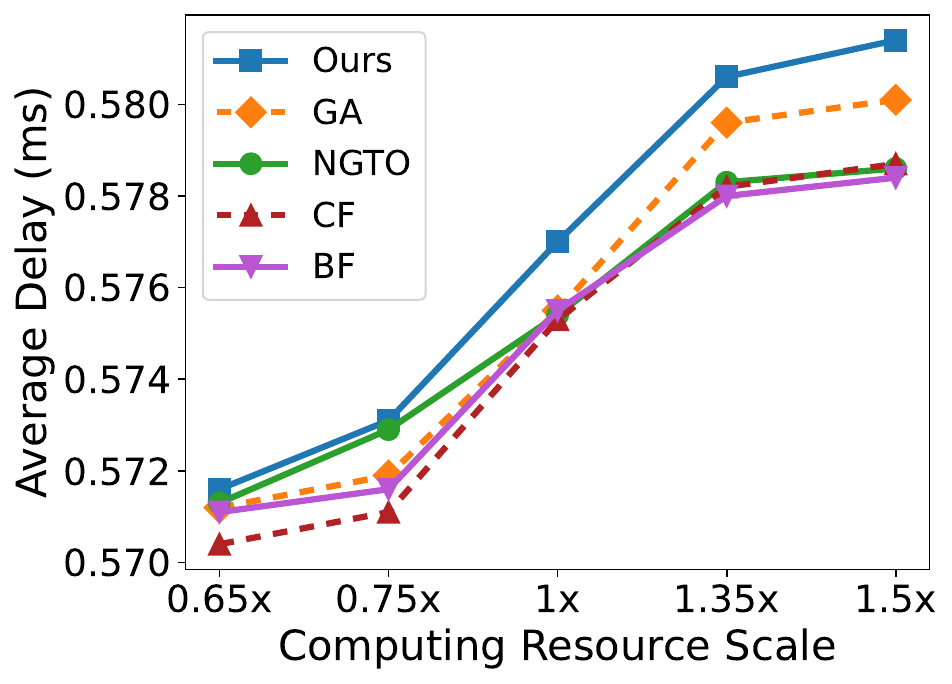}
		\label{fig:cmp4}
	}
        \vspace{-2mm}
	\caption{Inference performance with varying average computing resource for Bert on Tnews.}
	\label{fig:cmp tnews}
        \vspace{-4mm}
\end{figure}

We conduct a set of experiments with different arrival rates of tasks to evaluate the inference performance of DTO-EE and the baselines.
The average response delay and inference accuracy of five approaches on two datasets are presented in Fig. \ref{fig:arrival rate ImageNet} and Fig. \ref{fig:arrival rate tnews}, respectively.
We observe that as the task arrival rate increases, the average response delay of all approaches rises, accompanied by a varying degree of accuracy degradation.
Nevertheless, DTO-EE achieves the lowest average response delay and the highest inference accuracy, consistently outperforming other approaches on both datasets, and its advantage is more obvious with a heavier computing load.
Within the heterogeneous system, CF and BF focus solely on computing resources and transmission rates, respectively, which may lead to excessive transmission delay or uneven computing load.
NGTO only performs a local optimal offloading strategy for the current subtask without considering the impact of the strategy on the offloading process of subsequent subtasks, which may lead to poor overall performance.
GA constantly collects the resource information from all nodes to search for a path with the shortest delay for task offloading.
However, each ED selfishly optimizes its task offloading strategy, which leads to an unbalanced computing load and an increase in overall delay.
Concretely, by Fig. \ref{fig:arrival rate ImageNet}, when the average task arrival rate is 4.8 tasks/s for ResNet101 on ImageNet, DTO-EE achieves the average response delay of 195ms and 61.7\% inference accuracy, while GA, NGTO, CF, and BF achieve an average response delay of 250ms, 246ms, 299ms, 329ms, and their inference accuracy is 61.1\%, 59.8\%, 59.8\%, and 59.3\%, respectively.
Besides, when the average task arrival rate is 2 tasks/s for Bert on Tnews, Fig. \ref{fig:arrival rate tnews} shows that DTO-EE reduces the average response delay by about 11.6\%, 8.6\%, 23.3\%, and 28.4\%, and improves the inference accuracy by about  1.9\textperthousand, 1.8\textperthousand, 1.1\textperthousand, and 3.9\textperthousand, compared to GA, NGTO, CF and BF, respectively.




Secondly, we also conduct a set of experiments on these approaches with different computing resources.
We adjust the computing mode of ESs to scale the average computing resource of the system, and the results are presented in Fig. \ref{fig:cmp ImageNet} and Fig. \ref{fig:cmp tnews}.
We observe that DTO-EE still outperforms other approaches with different computing resources.
For instance, by Fig. \ref{fig:cmp ImageNet}, with the resource-constrained case where the computing resource is shrunk to 0.65$\times$, DTO-EE achieves an average response delay of 257ms and 60.3\%  inference accuracy for ResNet101 on ImageNet, while GA, NGTO, CF, and BF achieve the average response delay of 310ms, 330ms, 329ms, and 410ms, and reach the inference accuracy of 59.1\%, 59.3\%, 57.9\%, and 58.3\%, respectively. 
Besides, as shown in Fig. \ref{fig:cmp tnews}, with the resource-rich case where the computing resource is scaled up to 1.5$\times$, the average response delay of DTO-EE for Bert on Tnews can be reduced by 6.0\%, 22.2\%, 8.2\%, and 23.8\%, and the inference accuracy can be improved by 2.2\textperthousand, 4.8\textperthousand, 4.7\textperthousand, and 5.2\textperthousand, compared to GA, NGTO, CF, and BF, respectively.
These results demonstrate the superiority of DTO-EE in addressing system heterogeneity.

\subsection{Effect of Dynamic Environment}\label{sec: Dynamic Environment}

\begin{figure}[t]
	\centering
	\subfigure[Average response delay]
	{
		\includegraphics[width=0.46\linewidth,height=3.5cm]{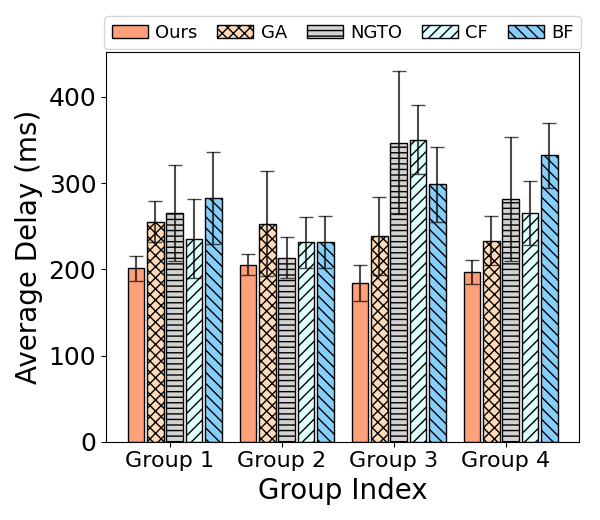}
		\label{fig:dynamic1}
	}
	\subfigure[Inference accuracy]
	{
		\includegraphics[width=0.46\linewidth,height=3.5cm]{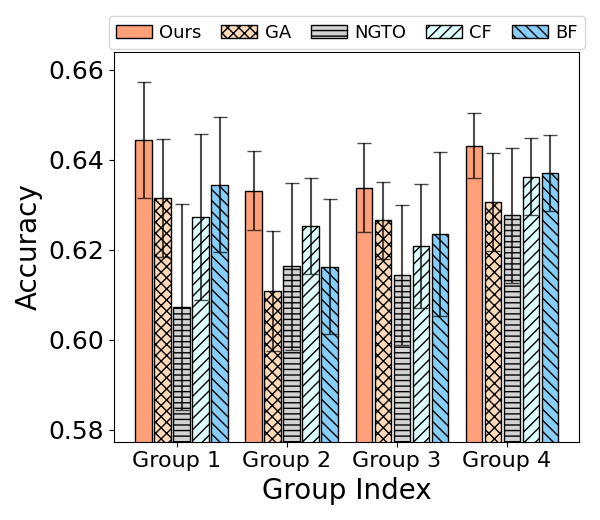}
		\label{fig:dynamic2}
	}
	\caption{Inference performance in the dynamic environment for ResNet101 on ImageNet.}
        \vspace{-2mm}
	\label{fig:dynamic ImageNet}
\end{figure}

\begin{figure}[t]
	\centering
	\subfigure[Average response delay]
	{
		\includegraphics[width=0.46\linewidth,height=3.5cm]{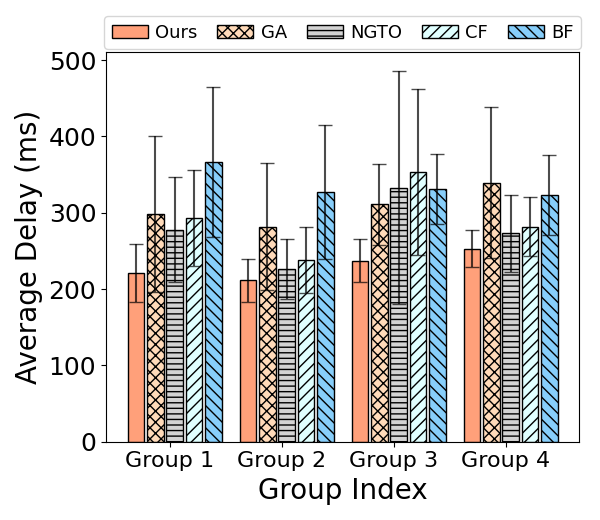}
		\label{fig:dynamic3}
	}
	\subfigure[Inference accuracy]
	{
		\includegraphics[width=0.46\linewidth,height=3.5cm]{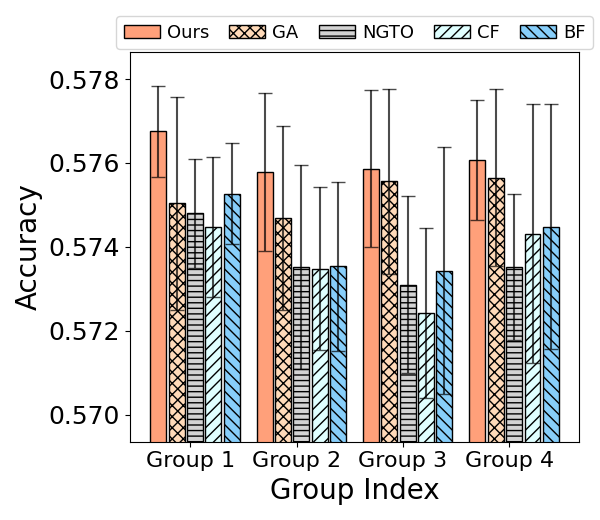}
		\label{fig:dynamic4}
	}
        \vspace{-2mm}
	\caption{Inference performance in the dynamic environment for Bert on Tnews.}
        \vspace{-4mm}
	\label{fig:dynamic Tnews}
\end{figure}

To demonstrate the robustness of DTO-EE, we evaluate the performance of DTO-EE and baselines in a dynamic environment.
At the end of each time slot, we randomly change the task arrival rate of EDs and adjust the computation mode of ESs to simulate the dynamic environment. 
We record the average response delay and accuracy for each time slot, and integrate the data for every 10 time slots into one group.
As shown in Fig. \ref{fig:dynamic ImageNet} and Fig. \ref{fig:dynamic Tnews}, DTO-EE outperforms other approaches in terms of delay and accuracy. 
NGTO requires the task offloaders to optimize their task offloading strategies in a cyclic manner up to Nash equilibrium, which leads to a longer decision time and poor performance in dynamic environments. 
CF and BF can quickly adjust the offloading strategy but cannot balance the computing load of heterogeneous nodes. 
GA relies on the state information of all ESs, which needs to be forwarded multiple times to reach the ED and may be outdated due to interference from other EDs. 
This outdated information may cause the tasks of multiple EDs to be concentrated on a few paths without effectively balancing the computing load.
For example, by Fig. \ref{fig:dynamic ImageNet}, DTO-EE achieves the average response delay of 184ms and 63.4\% accuracy in Group 3 on ImageNet, while GA, NGTO, CF, and BF achieve the average response delay of 238ms, 347ms, 350ms, and 298ms, and reach the inference accuracy of 62.7\%, 61.4\%, 62.1\%, and 62.4\%, respectively. 
Besides, as shown in Fig. \ref{fig:dynamic Tnews}, DTO-EE achieves the lowest and most stable response delay while maintaining the highest inference accuracy in each group for Bert on Tnews, the average standard deviation of the delay of DTO-EE in the four groups is 29ms, while the average standard deviation of the delay of GA, NGTO, CF, and BF is 84ms, 78ms, 63ms, and 71ms, respectively.
These results demonstrate that DTO-EE is effective and efficient in addressing the challenge of the dynamic environment.

\subsection{Effect of Dynamic Threshold}

\begin{figure}[t]
	\centering
	\subfigure[Average response delay]
	{
		\includegraphics[width=0.46\linewidth,height=3.5cm]{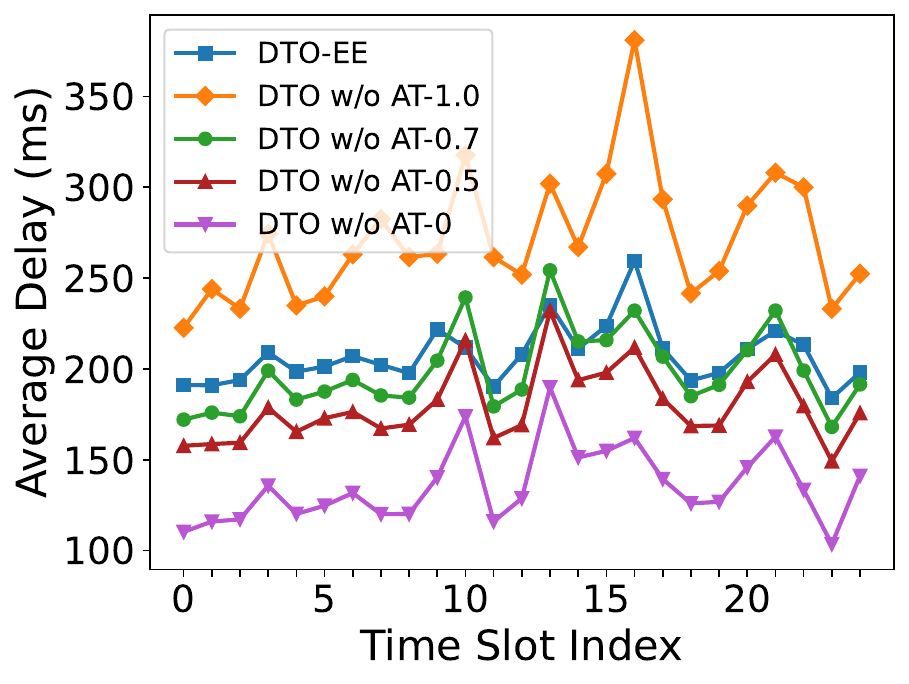}
		\label{fig:aba delay}
	}
	\subfigure[Inference accuracy]
	{
		\includegraphics[width=0.46\linewidth,height=3.5cm]{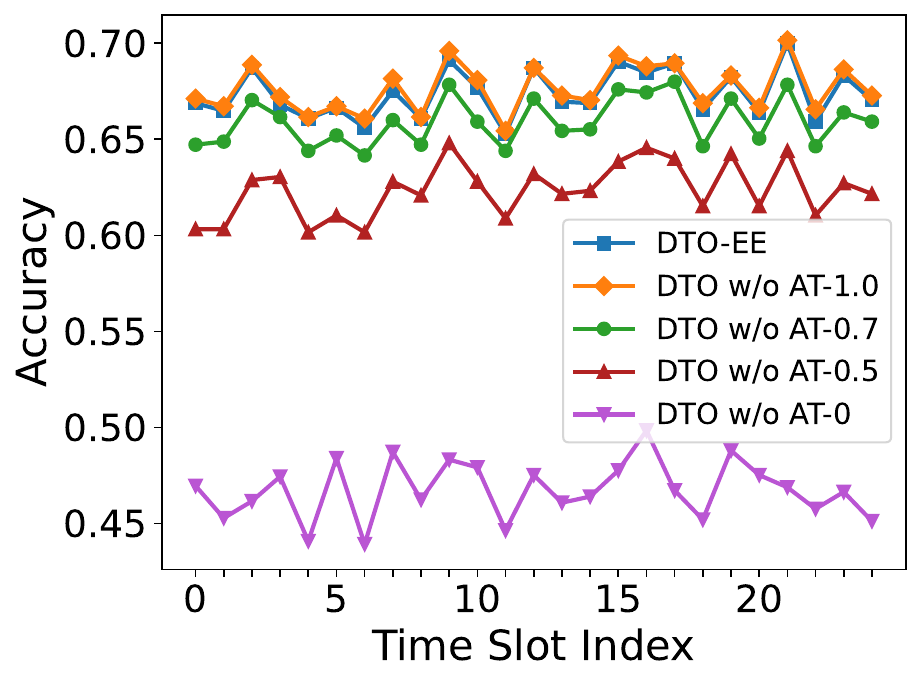}
		\label{fig:aba acc}
	}
        \vspace{-1mm}
	\caption{Effect of dynamic thresholds on ImageNet.}
	\label{fig:aba ImageNet}
        \vspace{-4mm}
\end{figure}

Dynamically adjusting confidence thresholds is developed to balance the response delay and inference accuracy.
Herein, we conduct multiple experiments within a dynamic environment to verify the effectiveness of dynamically adjusting confidence thresholds. 
We adopt the DTO-EE without adjusting thresholds (DTO w/o AT) as the baselines. 
For instance, in DTO w/o AT-0.7, confidence thresholds of all exit branches are fixed at 0.7, and each node only optimizes its task offloading strategy.
We initially deploy each sub-model to the same number of multiple ESs with the same computing power, and the transmission rate between nodes is the same. 
We test 25 time slots in the dynamic environment described in Sec. \ref{sec: Dynamic Environment}, recording the average response delay and inference accuracy of tasks in each time slot.
By Fig. \ref{fig:aba ImageNet}, compared to DTO-EE w/o AT-1.0 without early exiting, DTO-EE reduces the average response delay by 23.5\% 
while achieving almost the same inference accuracy.
Compared to DTO-EE w/o AT-0.7, DTO-EE improves the inference accuracy by 2.2\% at the expense of only 4.3\% average response delay. 
These results reflect the positive role of dynamically adjusting confidence thresholds.

\section{Related Work}\label{sec:related}
Offloading computation-intensive tasks from end devices to edge servers can effectively reduce task response delay and energy consumption.
Chen \etal \cite{chen2015efficient} explore the multi-user computation offloading problem for mobile-edge cloud computing in a multi-channel wireless interference environment, and adopt a game theoretic approach for achieving efficient task offloading in a distributed manner.
Similarly, Chen \etal \cite{chen2018task} focus on task offloading in a software-defined ultra-dense edge network, with the objective of minimizing the delay while saving the battery life of end devices. 
Tang \etal \cite{multiserver} propose a distributed algorithm based on deep reinforcement learning (DRL) to optimize the offloading strategies of end devices without knowing the computing load of edge servers.
Xiao \etal \cite{xiao2018distributed} present an offloading strategy where various fog nodes with different computing and energy resources  work cooperatively, they use the subgradient method with dual decomposition to reduce the service response delay and improve the efficiency of power usage.
Besides, Xiao \etal \cite{xiao2022edge} propose an offloading algorithm based on deep reinforcement learning to offload several subtasks with dependent relations in the dynamic environment.

The collaborative inference of multiple nodes can be realized by model partitioning, which can effectively reduce the computing load of a single node.
Gao \etal \cite{gao2023Task} divide the tasks between end devices and edge servers, and design a distributed algorithm based on aggregate game theory to optimize model partitioning and offloading strategies.
Similarly, Xu \etal \cite{xu2023cnn} propose a dynamic offloading strategy based on game theory combined with convolutional neural network (CNN) partition, which enables tasks to be inferred in parallel on multiple edge nodes to reduce delay.
Mohammed \etal \cite{2partition2} propose a distributed solution based on matching theory for joint partitioning and offloading of DNN inference tasks to curtail computing delay and enhance resource utilization.

The early exit mechanism can be integrated into collaborative inference to further decrease task response delay.
Li \etal \cite{Edgent} propose collaborative inference between end devices and edge servers, by predicting the delay of each layer of the network, the tasks exit early at the appropriate place to maximize the accuracy of the model inference.
Liao \etal \cite{liu2023resource} investigate the joint optimization of early exit and batching in task offloading to improve system throughput.
Ju \etal \cite{eDeepSave} focus on accelerating DNN inference with an early exit mechanism to maintain performance during mobile network transitions.
Liu \etal \cite{exit1} propose an adaptive edge inference framework based on model partitioning and multi-exit model, and balance the number of completed tasks and average inference accuracy by optimizing exit and partition points for each task.

\vspace{0.1cm}
\section{Conclusions}\label{sec:conclusion}

In this paper, we introduce a cooperative inference framework based on model partitioning and early exit mechanism.
To address the challenges of the heterogeneous system and dynamic environment, we theoretically analyze the coupled relationship between task offloading strategy and confidence thresholds of different exit branches.
Based on the coupled relationship and convex optimization, we develop a distributed algorithm, called DTO-EE, which enables each node to jointly optimize its offloading strategy and the confidence threshold to achieve a promising trade-off between response delay and inference accuracy.
The experimental results show that DTO-EE reduces the response delay by 21\%-41\% and improves the inference accuracy by 1\%-4\%, compared to the baselines.




\newpage 
\balance
\bibliographystyle{IEEEtran}
\bibliography{content/refs}

\end{document}